\documentclass[journal]{IEEEtran}
\usepackage[T1]{fontenc}
\usepackage{cite}
\usepackage{amsmath,amssymb,amsfonts}
\usepackage{amsthm}
\usepackage{algorithm}
\usepackage{algpseudocode}
\usepackage{graphicx}
\usepackage{textcomp}
\usepackage{xcolor}
\usepackage{colortbl}
\usepackage{booktabs}
\usepackage{multirow}
\usepackage{array}
\usepackage{stfloats}
\usepackage{url}
\usepackage[hidelinks]{hyperref}

\newcommand{\method}{\textsc{SCC}}
\newcommand{\yhat}{\hat{\mathbf{y}}}
\newcommand{\yb}{\mathbf{y}}
\newcommand{\xb}{\mathbf{x}}
\newcommand{\svec}{\mathbf{s}}
\newcommand{\mvec}{\mathbf{m}}

\newtheorem{proposition}{Proposition}

\definecolor{cblue}{RGB}{31,119,180}
\definecolor{cgreen}{RGB}{44,160,44}
\definecolor{corange}{RGB}{255,127,14}
\definecolor{cred}{RGB}{214,39,40}
\definecolor{cgray}{RGB}{110,110,110}

\begin{document}

\title{Severity-Channel Conditioning: An Injection-Level Study
for Consumer-Device Deblurring}

\author{Abdullah~Al~Shafi,
        Kazi~Saeed~Alam,
        Sumaiya~Rahim~Suma,
        and~Sk.~Imran~Hossain,~\IEEEmembership{Member,~IEEE}%
\thanks{This work has been submitted to the IEEE for possible publication. Copyright may be transferred without notice, after which this version may no longer be accessible. \textit{(Corresponding author: Sk. Imran Hossain.)}}%
\thanks{The authors are with the Department of Computer Science and
Engineering, Khulna University of Engineering \& Technology, Khulna~9203,
Bangladesh (e-mail: abdullah.shafi99@gmail.com; saeed.alam@cse.kuet.ac.bd;
sumaiya.rahim234@gmail.com; imran@cse.kuet.ac.bd).}}

\markboth{Preprint}%
{Shafi \MakeLowercase{\textit{et al.}}: Severity-Channel Conditioning: An Injection-Level Study}

\maketitle

\begin{abstract}
Motion-deblurring models lose substantial accuracy when transferred from professional
cameras to smartphone-derived imagery. This paper studies whether a severity
descriptor, partly derived from the paired sharp target, helps adaptation on the
iPhoneBlur consumer-device benchmark. Under a shared update budget, input concatenation of a severity map is compared with global feature modulation by capture metadata, spatial feature modulation by the severity map, and a tier-contrastive output objective, together with an unconditioned spectral-loss control, all fine-tuned from the same domain-adapted latent-diffusion baseline. The
primary experiment is an oracle diagnostic: target-derived information is supplied
at inference. For the evaluated model pair, input concatenation improves overall
peak signal-to-noise ratio from $31.23$ to $31.30$~dB, whereas feature modulation, the contrastive objective, and the spectral-loss control do not exceed the baseline. The gain is positive on
$87.2\%$ of test images and on $14$ of $15$ held-out source videos; this video-level
analysis describes the evaluated checkpoint pair and does not establish robustness to
retraining, and three training seeds are reported separately. The input-level gain
recurs on a transformer backbone. In one exploratory run, zeroing the channel at
inference retains most of the gain in peak signal-to-noise ratio but lowers
structural similarity below the baseline. A normalized-step diagnostic does not
support the expected larger-gradient explanation. The results characterize the
evaluated configurations and do not establish injection location as the sole cause,
a spatial-locality mechanism, or a reliable deployment advantage.
\end{abstract}

\begin{IEEEkeywords}
Capture metadata, conditional restoration, consumer imaging, diffusion models,
domain adaptation, injection level, motion deblurring, privileged information.
\end{IEEEkeywords}

\section{Introduction}
\IEEEPARstart{M}{otion} deblurring is developed and benchmarked almost entirely on
professional capture hardware, action cameras and DSLR rigs, yet it is deployed
overwhelmingly on smartphones. This mismatch carries a concrete cost. On the
iPhoneBlur benchmark~\cite{iphoneblur}, a stratified consumer-device deblurring
dataset, restorers trained on the standard GoPro
corpus~\cite{nah2017gopro} and applied zero-shot to real iPhone captures lose
six to seven decibels of PSNR, several times larger than the two-to-three
decibel drops typical of cross-evaluation among professional
datasets~\cite{rim2020realblur,shen2019hide}. The gap reflects a domain shift: phone-derived captures differ from professional ones in image-signal-processing (ISP) and in handheld motion statistics. Fine-tuning on the target
domain recovers most of this gap; in the experiments reported here a latent-diffusion
backbone~\cite{diffir} rises from $27.22$~dB overall (GoPro zero-shot) to
$31.23$~dB, matching the strongest published methods~\cite{nafnet,fftformer}, but
one limitation remains. The fine-tuned model receives no explicit severity descriptor, although a mildly and a severely blurred frame differ by more than seven decibels in achievable quality~\cite{iphoneblur}; it must infer severity from the image alone. We refer to this absence of an explicit severity signal as
\emph{descriptor-blindness}; supplying that signal is the subject of this paper.

The iPhoneBlur benchmark is distinctive in supplying the signal that a
descriptor-blind model lacks. Every sample carries measured
capture metadata: a paired high-frequency-energy difference, a Farneb\"ack optical-flow~\cite{farneback} magnitude, a temporal blur window, a blur duration, and a sensor-noise estimate. Together these constitute a compact \emph{degradation descriptor} that
summarizes degradation that the fine-tuned model is not given explicitly, and our hypothesis is that supplying the descriptor to the restorer allows it
to condition its behaviour on each input and recover part of the residual that
domain fine-tuning leaves. Because the descriptor is computed from the captured video rather than from a gyroscope~\cite{gyro} or an event camera~\cite{eventdeblur}, it requires no additional capture hardware, although part of it depends on paired sharp frames. Its
deployment availability is nonetheless not automatic, and is examined explicitly in
Section~\ref{sec:prelim}. What the hypothesis leaves open is an architectural
question. Degradation-conditioned restoration is well
established~\cite{dasr,airnet,promptir,msdinet,ufpnet}, but existing methods each
commit to a single injection level chosen by design and do not compare levels under a
matched budget. A conditioning signal can enter at the input by concatenation, at the
feature level by modulating activations~\cite{film,sft}, at the output level by
shaping the prediction's representation~\cite{simclr}, or at the objective level
through the loss~\cite{focalfreq}. These levels perturb a strongly pretrained backbone in different ways.

The question is addressed empirically for selected configurations: input concatenation of a severity map, global and spatial feature modulation, a tier-contrastive output objective, and an unconditioned spectral-loss control, all compared against a
\emph{schedule-matched} baseline that receives the same additional fine-tuning
budget without the descriptor. With the descriptor supplied at inference, only
input-level conditioning, a severity channel appended to the image and fine-tuned
end-to-end, improves over the schedule-matched fine-tune, consistently across the
three difficulty tiers and on a clear majority of individual images; feature modulation, the contrastive objective, and the spectral-loss control do not. This oracle improvement is modest
($+0.07$~dB overall). Its uniform sign across tiers, an $87\%$ per-image win rate
that survives aggregation to the video level for the evaluated model pair
($p=0.0002$), and its recurrence on a transformer backbone show that it is not
confined to a few images or videos; with three training seeds, robustness to
retraining is not established. Relative to the direct blur-prior deblurring literature (Section~\ref{sec:related}), the contribution is not the first use of a degradation descriptor in deblurring, which is established, but a controlled comparison of the level at which such a descriptor enters. The result is accompanied by a function-preserving-initialization property of
the added channel, a statement of when a descriptor can lower attainable risk, and
a discussion of possible reasons for the ordering, which the experiments do not
isolate.

This paper makes three contributions. First, it formulates the injection-level
question for a measured degradation descriptor and compares selected adaptation
configurations on the same backbone under a matched update budget, including an
unconditioned loss control. Second, it reports that, with the descriptor supplied
at inference, only input-level severity-channel conditioning exceeds the
schedule-matched baseline, consistently across difficulty tiers and on a large
majority of individual images, while the evaluated feature- and output-level variants and the spectral-loss control do not. Single-run descriptor and capacity controls indicate
that the gain tracks the descriptor's content rather than the widened convolution. Third, it states the limits of
the result: the descriptor's high-frequency term is derived from paired data and is
not available in deployment, zeroing the channel at test time in a single run costs $0.02$~dB of
PSNR and lowers SSIM slightly below the baseline, and the per-image statistics are
computed over frames nested within $15$ held-out videos rather than independent captures. The study is therefore diagnostic: headline results use the target-derived descriptor at inference and characterize the evaluated configurations, not a deployed model.
The
remainder of the paper reviews related work by method family
(Section~\ref{sec:related}), formalizes the problem and the domain gap
(Section~\ref{sec:prelim}), defines the evaluated configurations
(Section~\ref{sec:diag}), develops the proposed method and its theory
(Section~\ref{sec:method}), presents the evaluation (Section~\ref{sec:exp}), and closes with discussion and conclusions
(Sections~\ref{sec:disc}--\ref{sec:concl}).

\section{Related Work}
\label{sec:related}
Learning-based motion deblurring replaced hand-designed priors with end-to-end
networks: multi-scale~\cite{nah2017gopro}, scale-recurrent~\cite{srn},
multi-stage~\cite{mprnet}, and multi-input multi-output~\cite{mimounet} designs;
activation-free~\cite{nafnet} and half-instance-normalized~\cite{hinet} backbones;
attention-based restorers such as Restormer~\cite{restormer},
Uformer~\cite{uformer}, and Stripformer~\cite{stripformer}; and frequency-domain
methods including FFTformer~\cite{fftformer}, residual Fourier
transformation~\cite{deeprft}, and focal frequency loss~\cite{focalfreq}. All of
them degrade sharply on consumer captures~\cite{iphoneblur}, and the present study
is orthogonal to the choice of architecture. Diffusion models reverse a gradual
noising process~\cite{ho2020ddpm}, have been applied to image-to-image
restoration~\cite{saharia2022palette}, and gain efficiency in a compressed
latent~\cite{rombach2022ldm}; DiffIR diffuses a compact degradation prior that
conditions a transformer generator~\cite{diffir}, and guidance-calibrated variants
have been compressed by distillation~\cite{dash}. How auxiliary information enters a
network is itself a design axis: FiLM predicts per-channel affine
parameters~\cite{film}, spatial feature transform makes them spatially
varying~\cite{sft}, contrastive objectives group like samples~\cite{simclr}, and
input concatenation leaves the architecture intact.

Degradation-conditioned restoration is established. Blind super-resolution injects
a contrastively learned degradation representation through degradation-aware
convolutions~\cite{dasr}; all-in-one restoration encodes the corruption
contrastively~\cite{airnet}, as learned prompts~\cite{promptir}, as a human
instruction~\cite{instructir}, as an input-level visual prompt~\cite{prores}, or as
a vision-language embedding used through cross-attention~\cite{daclip}. These
descriptors are learned from the degraded image to discriminate corruption type,
whereas the descriptor here is measured and grades the severity of one degradation.
For deblurring, spatial blur estimates enter at the feature level in
MSDI-Net~\cite{msdinet} and through dark- and bright-channel priors~\cite{dbcpenet},
at every encoder stage in UFPNet~\cite{ufpnet}, as a gating map in blur-pixel
discretization~\cite{blurpixel}, and in image space through motion-adaptive
filters~\cite{miscfilter} and kernel priors~\cite{paramblur}; ParamISP conditions
the imaging pipeline itself on EXIF metadata~\cite{paramisp}. Each method commits to
one injection level by design. Super-resolution work has compared input
concatenation with feature modulation for a fixed code~\cite{sft,srmd,ikc}, but none compares input-, feature-, and output-level configurations for motion deblurring under a matched budget with a privileged descriptor, alongside an unconditioned loss control. That
comparison is the contribution here; because the configurations do not share an
identical signal, it characterizes them rather than isolating level as the sole
factor. Gyroscope~\cite{gyro} and event~\cite{eventdeblur} sensors supply motion
priors but need extra hardware; camera-formation modelling requires linear-domain
operations~\cite{brooks2019unprocessing}, unpaired translation has bridged
synthetic and real data~\cite{cyclegan}, and realistic blur synthesis adapts the
training data for a single camera~\cite{rim2022rsblur}. The present study instead
conditions adaptation on a per-sample descriptor and requires no resynthesis.

\section{Problem Setting}
\label{sec:prelim}
Let $\xb\in\mathbb{R}^{3\times H\times W}$ be a blurred observation and $\yb$ the
sharp image. Motion blur integrates a moving scene over the exposure interval, and
on a physical device the result is further transformed by the imaging pipeline,
\begin{equation}
\xb = \Gamma\!\Big(\tfrac{1}{T}\!\int_{0}^{T}\!\Gamma^{-1}\!\big(\Phi_{\text{ISP}}(\mathcal{S}(t))\big)\,dt\Big) + \boldsymbol{\eta},
\label{eq:blurmodel}
\end{equation}
where $\mathcal{S}(t)$ is the instantaneous scene, $T$ the exposure,
$\boldsymbol{\eta}$ sensor noise, $\Phi_{\text{ISP}}$ the ISP that performs tone
mapping, HDR fusion, denoising, and sharpening, and $\Gamma$ the gamma encoding with
inverse $\Gamma^{-1}$. The operator order matters: in the
benchmark used here, each source frame is captured through the full smartphone video
pipeline \emph{first}, and the blurred observation is formed by averaging those
already-processed frames after gamma linearization~\cite{iphoneblur}. The blur is
therefore synthesized from high-frame-rate smartphone video rather than acquired as a
long-exposure photograph, and $\Phi_{\text{ISP}}$ acts per frame inside the integral,
not once on a clean temporal average. This composition is one plausible source of the consumer domain gap, although the
experiments here do not separate it from motion and scene statistics. The benchmark
reports a mean high-frequency difference of $23\%$ between paired sharp and
blurred frames~\cite{iphoneblur}; because it varies from capture to capture, two
frames with similar motion may call for different restoration.

Fine-tuning the backbone on the target domain, denoted the Domain-Adapted
Baseline (DAB), adapts to the target domain on average. The DAB has fixed learned parameters, $f_\theta:\xb\mapsto\yhat$, and receives no explicit external severity descriptor. Its response remains input-dependent, and it may infer degradation from the blurred pixels. The oracle descriptor can supply additional target-derived information; the experiments assess its value for the fitted models rather than establish a limitation of all image-only predictors. Proposition~\ref{prop:blind} states when this matters: a restorer observing only
$\xb$ can attain at best the risk $R_{\xb}$, while a restorer additionally observing
a descriptor $\mvec$ can attain $R_{\xb,\mvec}\le R_{\xb}$, with strict improvement
only where $\mvec$ changes the Bayes-optimal response. A descriptor that distinguishes such inputs can carry information unavailable to the DAB. The benchmark supplies such a descriptor for every sample: five measured metadata
fields, a paired high-frequency-energy difference $m_{\text{isp}}$
(blurred-minus-sharp high-frequency energy, negative when blurring removes high-frequency energy), an optical-flow motion magnitude, a
temporal blur window, a blur duration, and a sensor-noise estimate, collected into
a vector $\mvec$. \method{} uses a two-term severity map derived from $m_{\text{isp}}$ and the
image gradient alone (Section~\ref{sec:method}); the FiLM baseline receives all
five fields, giving the feature-level mechanism every advantage
(Section~\ref{sec:diag}). One implication matters at inference: $m_{\text{isp}}$ is computed from the sharp image and
is therefore ground-truth-derived at the benchmark level. At deployment, a proxy would be needed instead, either from the ISP where it exposes the statistic or
estimated from the blurred input alone. Section~\ref{sec:disc} reports a single-run diagnostic of this setting: zeroing the
severity map, which removes the descriptor rather than substituting a constant for
$m_{\text{isp}}$, costs $0.02$~dB ($31.30\!\rightarrow\!31.28$~dB, still above the
baseline's $31.23$~dB) and lowers SSIM slightly below the baseline. The question studied here is where $\mvec$ should enter a conditioned restorer $f_\theta(\xb,\mvec)$.
Success is measured in PSNR, SSIM~\cite{ssim}, and LPIPS~\cite{lpips}, stratified into
Easy, Medium, and Hard tiers, a stratification defined by the benchmark itself and
independent of any mechanism studied here, and per-model behaviour is summarized
by the difficulty gap
$\Delta_{\text{gap}}=\text{PSNR}_{\text{Easy}}-\text{PSNR}_{\text{Hard}}$.

\section{Conditioning Mechanisms}
\label{sec:diag}
The configurations share the backbone, dataset, optimizer, and update budget, and differ in where conditioning enters and what signal it receives. Figure~\ref{fig:4level} illustrates the levels considered.

\begin{figure*}[t]
\centering
\includegraphics[width=\textwidth]{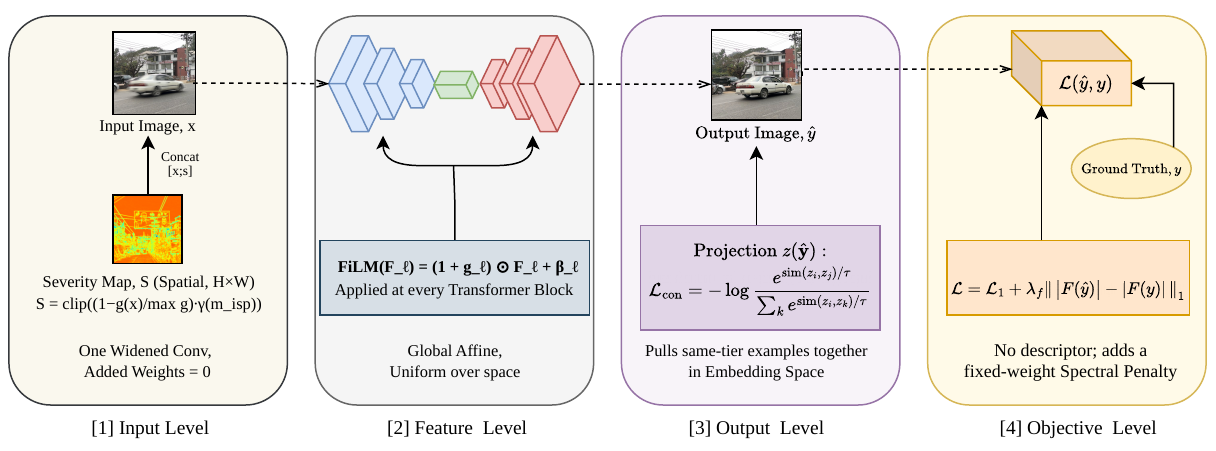}
\caption{Injection levels considered. Dashed arrows show the restorer's data path, from input image through features and prediction to the loss. Each configuration modifies only its own level, and no experiment combines mechanisms. Input: severity map concatenated to the image. Feature: affine modulation by the metadata vector (global) or the severity map (spatial). Output: a tier-contrastive term on a projection of the prediction. Objective: a fixed-weight spectral loss that receives no descriptor, reported as an unconditioned control.}
\label{fig:4level}
\end{figure*}

At the input level, Severity-Channel Conditioning (\method) converts $\mvec$ into a
spatial severity map $\svec\in[0,1]^{H\times W}$ and concatenates it to the RGB
image, so that the conditioned model computes $f_\theta([\xb;\svec])$. The map is
derived from the blurred image and the paired high-frequency statistic,
\begin{equation}
\svec = \mathrm{clip}_{[0,1]}\!\Big(\big(1-\tfrac{g(\xb)}{\max g(\xb)}\big)\cdot
\gamma(m_{\text{isp}})\Big),
\label{eq:sev}
\end{equation}
where $g(\xb)$ is a Gaussian-smoothed Sobel gradient-magnitude map (which emphasizes
low-gradient regions, whether blurred or naturally smooth) and $\gamma(\cdot)$ maps the
paired high-frequency statistic, min--max normalized over all released metadata (training and test, range $[-3830.9,-33.6]$), to a linearly decreasing scale in $[0.5,1.5]$; because the descriptor is target-derived and supplied at oracle inference, this fit is transductive. Only the prior encoder's first projection is
widened to accept the severity channel, with the added weights zero-initialized so
that the model is identical to the DAB at initialization (Section~\ref{sec:method});
the descriptor thus enters the prior encoder alongside the image and reaches the
restorer only through the compact prior it produces. At the feature level, feature-wise linear modulation instead maps
$\mvec$ through an embedding MLP $h_\phi$ to per-block affine parameters and
modulates the output of every transformer block across the generator,
\begin{equation}
(\mathbf{g}_\ell,\beta_\ell)=h_\phi(\mvec),\qquad
\mathrm{FiLM}(\mathbf{F}_\ell)=(1+\mathbf{g}_\ell)\odot\mathbf{F}_\ell+\beta_\ell,
\label{eq:film}
\end{equation}
following the standard FiLM formulation~\cite{film}. In the executed configuration,
$\mvec$ holds five metadata fields in this order: optical-flow motion magnitude,
paired high-frequency difference, blur window, blur duration, and noise estimate,
each z-score normalized with statistics fitted on the training split and without
imputation. The embedding is a three-layer MLP ($5\!\to\!128\!\to\!128\!\to\!128$,
SiLU). For each of the $38$ transformer blocks of the encoder, bottleneck, decoder,
and refinement stages, a linear layer maps the embedding to the $2C_\ell$ values
$(\mathbf{g}_\ell,\beta_\ell)$, which are broadcast over space and applied to that
block's output. These layers use Xavier-uniform weights with gain $0.01$ and zero
bias, so that $1+\mathbf{g}_\ell\!\approx\!1$ and $\beta_\ell\!\approx\!0$ at
initialization; with a zero metadata vector, the initialized model differs from the
DAB by a mean absolute $0.153$ intensity units on $[0,255]$. The embedding and
generators add $1.68$M parameters. A training-only auxiliary head global-average-pools
the $384$-channel bottleneck feature and predicts the normalized $\mvec$ through a
three-layer MLP ($384\!\to\!128\!\to\!128\!\to\!5$, final layer zero-initialized);
its mean $\ell_1$ loss is added with weight $0.05$, and its gradient reaches the restorer
through the bottleneck feature. A spatial variant follows spatial feature transform~\cite{sft}: in place of the metadata vector, each of the same $38$ blocks receives the severity map of \eqref{eq:sev}, bilinearly resized to the block's resolution and passed through a per-block three-layer convolutional network ($1\!\to\!32\!\to\!32\!\to\!2C_\ell$, $3\times3$ kernels, SiLU) whose output gives spatially varying $(\mathbf{g}_\ell,\beta_\ell)$, applied as in \eqref{eq:film}. Its final layer uses the same Xavier gain $0.01$ and zero bias; the variant adds $4.05$M parameters, is trained with the $\ell_1$ objective alone, and receives the oracle severity map at inference. At the output level, a tier-contrastive term~\cite{simclr,airnet} acts on a
two-layer MLP projection $\mathbf{z}(\hat\yb)$ of the prediction, with
$\ell_2$-normalized embeddings,
\begin{equation}
\mathcal{L}_{\text{con}}=-\frac{1}{|A|}\sum_{i\in A}\frac{1}{|P(i)|}\sum_{p\in P(i)}
\log\frac{e^{\,\mathbf{z}_i^{\top}\mathbf{z}_p/\tau}}{\sum_{k\neq i}e^{\,\mathbf{z}_i^{\top}\mathbf{z}_k/\tau}},
\label{eq:contrast}
\end{equation}
where $P(i)$ holds the other same-tier images in the mini-batch (the anchor
excluded), $A$ is the set of anchors with nonempty $P(i)$, and $\tau=0.1$. The arm
uses a physical batch of four images without accumulation, so each anchor is
contrasted against three others; with the training tier shares, about $74\%$ of
anchors have a same-tier positive, and the remainder are masked. The term is added
to $\mathcal{L}_1$ with weight $0.1$, and its gradients reach the restorer through
$\hat\yb$, so it shapes the reconstruction indirectly rather than supervising
pixels.

Finally, an unconditioned spectral-loss control uses no severity channel and adds a Fourier-magnitude reconstruction term with a fixed weight,
\begin{equation}
\mathcal{L}=\mathcal{L}_1+\lambda_f\big\lVert\,|\mathcal{F}(\yhat)|-|\mathcal{F}(\yb)|\,\big\rVert_1,
\label{eq:freq}
\end{equation}
where $\mathcal{F}$ is the orthonormally scaled two-dimensional real discrete
Fourier transform applied per colour channel, with the mean taken over all
coefficients and $|\cdot|$ is the elementwise magnitude, so the term
penalizes spectral-magnitude error only. The weight $\lambda_f=0.1$ is fixed and does
not depend on the descriptor; this arm therefore tests whether an additional spectral objective helps the plain three-channel baseline. It receives no descriptor and serves as an auxiliary-loss control, not as a descriptor-injection arm. It is also distinct from focal frequency
loss~\cite{focalfreq}, whose adaptive per-frequency weighting is not used here. The
frequency-domain error is penalized
in $\ell_1$ rather than $\ell_2$ for robustness to the heavy-tailed spectral
residuals typical of motion blur, where a few high-magnitude frequency components
would otherwise dominate an $\ell_2$ penalty; the term is thus a spectral-sparsity
prior on the residual rather than a Parseval-exact surrogate for PSNR. As
Section~\ref{sec:exp} reports, with the descriptor supplied at inference only the
input-level mechanism improves over the schedule-matched DAB: both feature-level variants, the output-level contrastive term, and the spectral-loss control fall below the baseline. Possible reasons are discussed
next.

\section{Proposed Method and Its Theory}
\label{sec:method}
The proposed method is the input-level mechanism, and its design follows a single
principle: a weak descriptor should be introduced so that the network
can use it where it is informative and ignore it elsewhere, without disturbing a
representation that already restores well on average. The severity map of
\eqref{eq:sev} is computed at full resolution. Sobel operators yield the horizontal
and vertical gradients whose smoothed magnitude forms $g(\xb)$; inverting and
normalizing produces a field that is high in low-gradient regions, and the global
factor $\gamma(m_{\text{isp}})$ rescales the entire field, decreasing linearly as the
high-frequency statistic grows. The map therefore combines a spatial term computed from the blurred input
with a global term from paired data, neither of which the DAB receives as an
explicit signal. Figure~\ref{fig:severity} visualizes the map across tiers, and
Figure~\ref{fig:pipeline} shows the complete pipeline, including the DiffIR backbone
into which the severity channel is integrated.

\begin{figure*}[t]
\centering
\includegraphics[width=\textwidth]{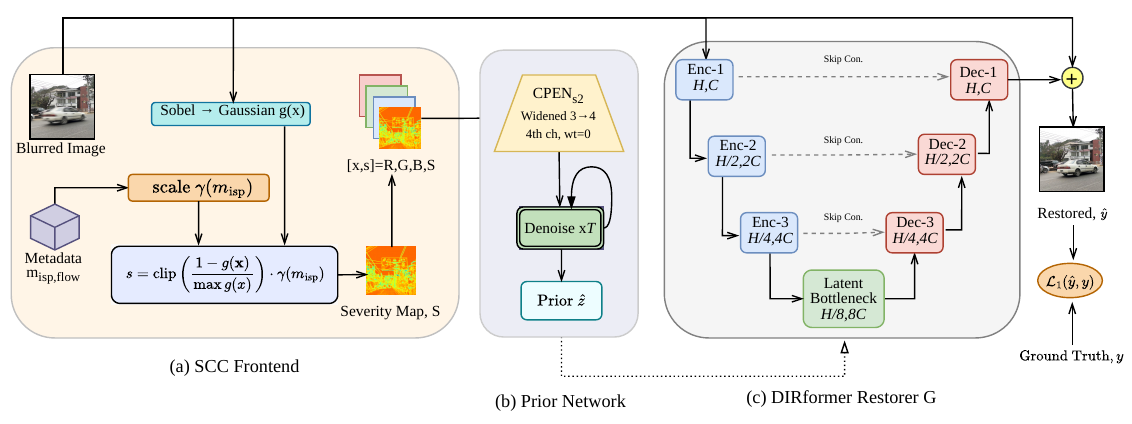}
\caption{\method{} integrated with DiffIR. \textbf{(a)}~A severity map $\svec$ is built from $\xb$ and $m_{\text{isp}}$ (Eq.~\eqref{eq:sev}) and concatenated to $\xb$; the prior encoder's first projection is widened with zero-initialized weights, $W'[:,48{:}64]=\mathbf{0}$, the only modified layer, so the network equals the DAB at initialization (Proposition~\ref{prop:identity}). \textbf{(b)}~CPEN$_{S2}$ encodes $[\xb;\svec]$ into a compact prior, refined by a DDPM reverse process in latent space to give $\hat{\mathbf{z}}$. \textbf{(c)}~The unchanged DIRformer restores $\xb$, conditioned at every block by $\hat{\mathbf{z}}$, with a global residual connection. The gradient term of $\svec$ uses $\xb$ alone; $m_{\text{isp}}$ requires paired data (Section~\ref{sec:prelim}).}
\label{fig:pipeline}
\end{figure*}

\begin{figure*}[t]
\centering
\includegraphics[width=\textwidth]{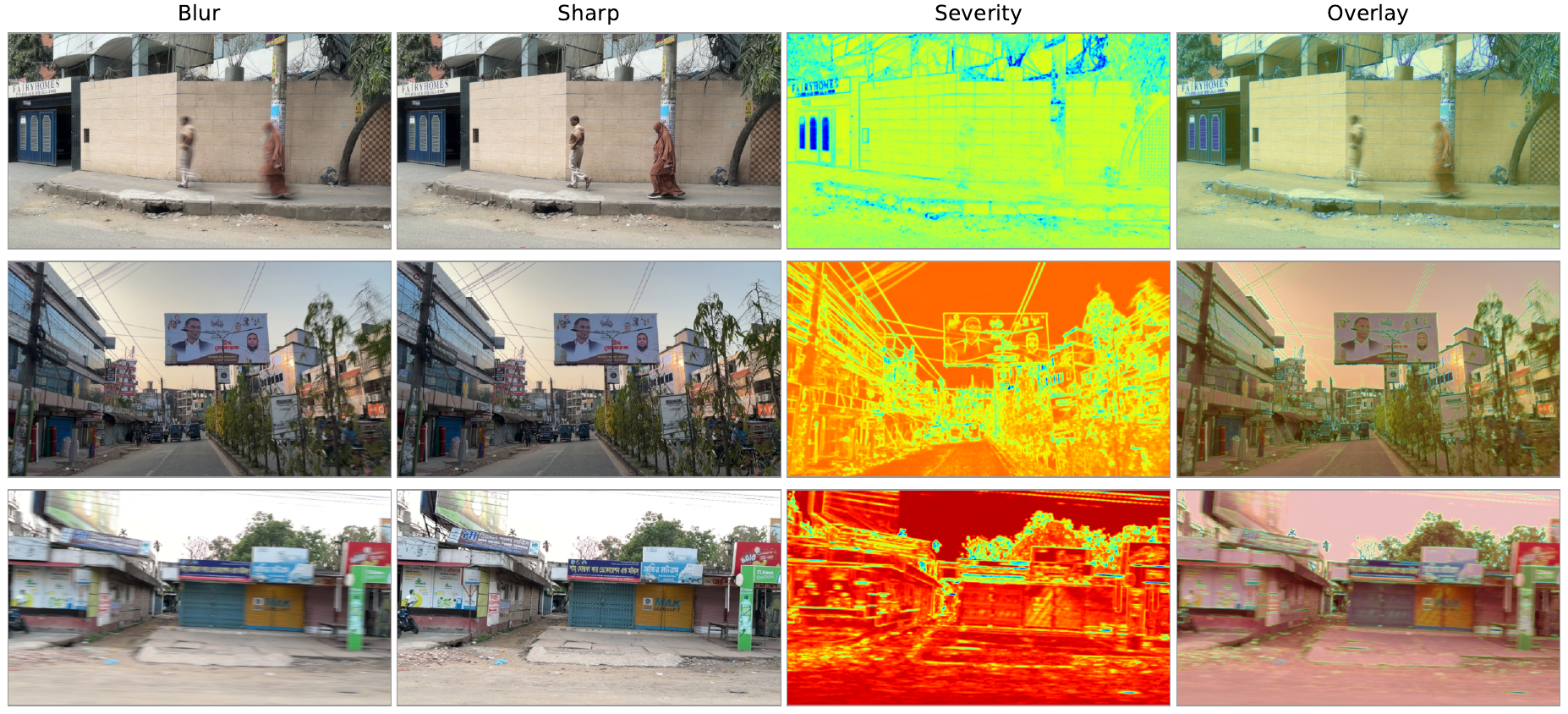}
\caption{Severity maps across tiers (rows: Easy, Medium, Hard). Columns: blurred input, ground truth, severity map $\svec$ (Eq.~\eqref{eq:sev}), and $\svec$ overlaid on the input. Each example is the test image closest to its tier's median high-frequency difference statistic. The map is high in low-gradient regions and scaled globally by $\gamma(m_{\text{isp}})$.}
\label{fig:severity}
\end{figure*}

The integration is minimal. The
prior encoder pixel-unshuffles its input by $4$, so its first projection has weight
$W\in\mathbb{R}^{C\times 48\times k\times k}$ for the image alone; the widened
weight $W'\in\mathbb{R}^{C\times 64\times k\times k}$ is formed with
$W'[:,0{:}48]=W$ and $W'[:,48{:}64]=\mathbf{0}$.

\begin{proposition}[Function-preserving initialization]
\label{prop:identity}
With $W'$ formed as above, the conditioned network at initialization produces an
output that is independent of $\svec$ and identical to the DAB, i.e.
$f_{\theta'}([\xb;\svec])=f_\theta(\xb)$ for every $\svec$. The fine-tuning
trajectory therefore begins at the DAB's training loss.
\end{proposition}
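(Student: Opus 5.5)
The plan is to show that the widened first projection computes exactly the same pre-activation as the original one, and then propagate that equality through the rest of the network by composition. Write the pixel-unshuffled input as $u=\mathcal{P}([\xb;\svec])$. Pixel-unshuffle by $4$ acts channelwise, so with the channel ordering used by the implementation the first $48$ channels of $u$ are $\mathcal{P}(\xb)$ and the last $16$ are $\mathcal{P}(\svec)$. The first step is to make this block structure explicit and state the ordering convention as an assumption: the slice $W'[:,0{:}48]$ must act on exactly the image channels that $W$ acted on.

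Next I would use linearity of convolution in the input channels:
\begin{equation*}
W' * u + b = W'[:,0{:}48] * \mathcal{P}(\xb) + W'[:,48{:}64] * \mathcal{P}(\svec) + b .
\end{equation*}
With $W'[:,0{:}48]=W$ and $W'[:,48{:}64]=\mathbf{0}$, the second term vanishes for every $\svec$. The result is $W*\mathcal{P}(\xb)+b$, the DAB's first-layer output, with the bias $b$ copied unchanged. This holds regardless of padding, because padding is applied per channel and the zeroed kernels map any padded $\svec$ channel to zero.

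The remaining parameters of $\theta'$ coincide with those of $\theta$. So every later layer receives identical inputs: the rest of the prior encoder, the compact prior, the latent DDPM, and the DIRformer, which also takes $\xb$ directly, which $\svec$ does not touch. An induction over the network's computation graph then gives $f_{\theta'}([\xb;\svec])=f_\theta(\xb)$. The graph contains skip connections, so the induction should be over a topological order rather than a simple chain.

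The latent diffusion reverse process is stochastic at inference, and I expect this to be the main obstacle. I would state equality either for the deterministic sampler used, or conditionally on the same noise realization, meaning coupled random seeds. Distributional equality then follows in either case. The claim about training loss follows immediately: the loss is a function of the outputs on the training pairs, and these outputs are identical. I would add a remark that the gradient with respect to $W'[:,48{:}64]$ is generally nonzero, since it is the correlation of the backpropagated error with $\mathcal{P}(\svec)$. The channel is therefore not frozen, and preservation holds only at initialization.
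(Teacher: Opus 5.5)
Your proposal is correct and follows the same argument as the paper. You split the pixel-unshuffled input into the image block (channels $0{:}48$) and the severity block ($48{:}64$), observe that the zero slice $W'[:,48{:}64]$ removes the severity term from the first projection, and conclude from the unchanged downstream weights that the outputs coincide for every $\svec$, and hence so does the initial training loss. Your extra points go somewhat beyond the paper's proof: the copied bias, padding, the topological-order induction over skip connections, and coupling the noise of the latent reverse process. The paper leaves the stochastic sampler implicit in the proof and fixes it only operationally in the determinism audit of Appendix~E, where $\hat{Z}_T$ is drawn once and reused.
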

\begin{proof}
Pixel-unshuffle by $4$ maps each input channel to $16$ channels, so the three image
channels occupy unshuffled channels $0{:}48$ and the severity channel occupies
$48{:}64$. The projection output is $\sum_{c=0}^{63} W'[:,c]*u[c]$, where $u$ is the
unshuffled input. Because $W'[:,48{:}64]=\mathbf{0}$, the severity terms vanish,
leaving $\sum_{c=0}^{47}W[:,c]*u[c]$, which is exactly the DAB projection; every
downstream weight is inherited unchanged, so the two networks compute the same
output for all $\svec$. Equal outputs imply an equal loss at initialization.
\end{proof}

The statement is confined to initialization: it does not guarantee the
\emph{trained} model attains test error no worse than the DAB, since optimization
dynamics and generalization gap can move the endpoint either way. What it provides
is a common starting function for every mechanism compared, which is what the
controlled comparison requires.

The property is not unique to input concatenation, and the comparison does not rely
on the claim that it is: feature-wise modulation admits the same guarantee under the
reparameterization $\gamma_\ell=1+\delta\gamma_\ell$, $\beta_\ell=\delta\beta_\ell$
and is exact when the MLP's output layer is initialized to zero; the FiLM baseline
uses a near-zero initialization and therefore starts approximately at the pretrained
function (Section~\ref{sec:diag}). Both mechanisms thus start from, or near, the
pretrained function, and the feature level nonetheless does not exceed the baseline
(Section~\ref{sec:condstudy}). The experiments do not establish why. Possible factors include the form of the
control: FiLM applies one affine pair per channel across every block of an
already-converged generator, whereas \method{} alters only the input to the prior
encoder and leaves the restorer unchanged. For the remaining two arms, the contrastive term acts on the restored output only through an auxiliary projection, and the spectral term may contribute little where a converged $\ell_1$ model already retains little high-frequency error. These are candidate explanations, not tested
mechanisms.

Two further statements bear on these explanations. The first states when observing
the descriptor can lower the attainable risk; the second is a hypothesis about the
input and feature levels, tested below.

\begin{proposition}[Value of side information]
\label{prop:blind}
Let $R_{\xb}=\inf_{f}\mathbb{E}[\ell(f(\xb),\yb)]$ be the minimum risk over
predictors of $\xb$ alone, and $R_{\xb,\mvec}=\inf_{g}\mathbb{E}[\ell(g(\xb,\mvec),\yb)]$
the minimum risk over predictors that additionally observe the descriptor, and
assume both risks are finite and attained. Then $R_{\xb,\mvec}\le R_{\xb}$. Let
$r(a;\xb,\mvec)=\mathbb{E}[\ell(a,\yb)\mid\xb,\mvec]$,
$r^\star(\xb,\mvec)=\inf_a r(a;\xb,\mvec)$, and $f^\star$ an $\xb$-only minimizer.
Writing the conditional excess risk as
$\Delta(\xb,\mvec)=r(f^\star(\xb);\xb,\mvec)-r^\star(\xb,\mvec)\ge 0$,
we have $R_\xb-R_{\xb,\mvec}=\mathbb{E}[\Delta(\xb,\mvec)]$, so the inequality is
strict exactly when $\Delta>0$ with positive probability. A difference in
Bayes-optimal \emph{action} is not sufficient: under a loss with non-unique
minimizers two distinct actions can carry identical conditional risk. The
proposition concerns access to $\mvec$ at prediction time; it does not show that
training with $\mvec$ and then zeroing it at inference beats the $\xb$-only risk.
\end{proposition}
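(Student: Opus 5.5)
The plan is to reduce everything to the tower property of conditional expectation, applied to the two minimizers. Since the statement assumes both infima are finite and attained, I fix an $\xb$-only minimizer $f^\star$ with $R_\xb=\mathbb{E}[\ell(f^\star(\xb),\yb)]$. For the inequality $R_{\xb,\mvec}\le R_\xb$, I note that $g(\xb,\mvec):=f^\star(\xb)$ is an admissible predictor that simply ignores $\mvec$. The infimum defining $R_{\xb,\mvec}$ is then at most its risk, which is $R_\xb$. No further work is needed for this part.

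For the identity $R_\xb-R_{\xb,\mvec}=\mathbb{E}[\Delta(\xb,\mvec)]$, I will write both risks through the conditional risk $r$. By the tower property, $R_\xb=\mathbb{E}\big[r(f^\star(\xb);\xb,\mvec)\big]$. The substantive step is to show $R_{\xb,\mvec}=\mathbb{E}[r^\star(\xb,\mvec)]$. The direction $\ge$ is immediate, since any $g$ satisfies $r(g(\xb,\mvec);\xb,\mvec)\ge r^\star(\xb,\mvec)$ pointwise, and taking expectations gives the bound. The direction $\le$ requires a measurable selector $g$ with $r(g;\xb,\mvec)=r^\star$ almost surely, or at least an $\varepsilon$-approximate one.

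This measurable-selection issue is the main obstacle. I would handle it in one of two ways. The first option reads ``attained'' as supplying a pointwise minimizer that is measurable. This is the natural reading, and for squared loss it is explicit, since $g=\mathbb{E}[\yb\mid\xb,\mvec]$. The second option invokes a standard measurable-selection theorem for $\varepsilon$-minimizers, which requires $r$ to be a Carathéodory function on a Polish action space, and then lets $\varepsilon\to0$. I would state the first option and remark on the second. Subtracting the two expressions then gives $R_\xb-R_{\xb,\mvec}=\mathbb{E}[\Delta]$. Finiteness of both risks makes the subtraction legitimate, and $\Delta\ge0$ holds by the definition of $r^\star$ as an infimum.

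Strictness is then elementary. A nonnegative random variable has positive expectation exactly when it is positive with positive probability, so $R_{\xb,\mvec}<R_\xb$ if and only if $\mathbb{P}(\Delta>0)>0$. One point needs checking here: the conclusion should not depend on which $f^\star$ is chosen. This holds because $\mathbb{E}[\Delta]$ equals the fixed difference $R_\xb-R_{\xb,\mvec}$ for every minimizer $f^\star$.

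The two remaining remarks in the statement need only short arguments. For the non-sufficiency of an action difference, I would give a one-line example. Take a loss whose conditional risk is flat on a set containing both $f^\star(\xb)$ and the $(\xb,\mvec)$-optimal action, for instance a loss that is constant in $a$. Then $\Delta=0$ even though the two actions differ. For the last sentence, I would observe that the result compares infima over predictors that receive $\mvec$ at prediction time. A model trained with $\mvec$ and evaluated with $\mvec=0$ is merely some $\xb$-only predictor, so its risk is bounded below by $R_\xb$ and not above it. The argument therefore asserts nothing about that setting.
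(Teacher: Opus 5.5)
Your proposal is correct and follows the same route as the paper's full proof in Appendix~B. Both arguments use the embedding $g(\xb,\mvec):=f(\xb)$ for the inequality, the tower property to write $R_\xb=\mathbb{E}[r(f^\star(\xb);\xb,\mvec)]$ and $R_{\xb,\mvec}=\mathbb{E}[r^\star(\xb,\mvec)]$, and nonnegativity of $\Delta$ for strictness; the paper simply assumes the measurable selection $g^\star$ whose necessity you flag, and illustrates non-uniqueness with absolute loss on an equiprobable binary target, which is more pertinent to the $\ell_1$ objective than your constant-loss example.
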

\begin{proof}[Proof sketch]
Every predictor $f(\xb)$ is also a predictor $g(\xb,\mvec)$ that ignores its second
argument, so the feasible set on the right contains the feasible set on the left and
the infimum cannot increase. By the tower property,
$R_\xb=\mathbb{E}[r(f^\star(\xb);\xb,\mvec)]$ and
$R_{\xb,\mvec}=\mathbb{E}[r^\star(\xb,\mvec)]$, so their difference is
$\mathbb{E}[\Delta]$, which is positive exactly when $\Delta>0$ on a set of positive
probability.
The full measure-theoretic derivation is given in Appendix~B of the supplementary material.
\end{proof}

Two consequences bound what the descriptor can be claimed to do. First, strictness
is not implied by the seven-decibel tier spread alone: a model observing $\xb$ may
already infer much of the severity from the degraded pixels themselves. Second,
whether the inequality is strict here is not settled by the experiments. The
test-time diagnostic of Section~\ref{sec:disc}, in which zeroing the descriptor
costs $0.02$~dB, shows that the trained network relies on $\mvec$ only slightly at
inference; it concerns one fitted model, not the optimal predictors compared in
the proposition.

One candidate explanation for the asymmetry between the input and feature levels
motivated the study. It is stated as a hypothesis, not a theorem, and is tested
below.

\medskip
\noindent\textbf{Hypothesis 1 (Gradient magnitude at the input level).}
\emph{At a converged $\ell_1$ optimum of the DAB, the gradient reaching the
zero-initialized input-channel weights, which scales with the agreement between
$\svec$ and the residual $\yb-\yhat$, is larger in relative terms than the gradient
reaching parameters that act multiplicatively on already-computed features.}

\medskip
Differentiating the loss through the zero-initialized weights $W'[:,48{:}64]$ produces, at
initialization, a term multiplying local patches of $\svec$ by the backpropagated
error signal carried through the full downstream Jacobian; the hypothesis is
therefore a statement about relative magnitudes, not an identity, and the
corresponding claim for feature-wise modulation is likewise not established
analytically. A normalized-step diagnostic at the shared DAB checkpoint
(Appendix~C) does not support it: one step changes the restored image by a mean absolute $0.039$ intensity units (on $[0,255]$) through the input channel and by $21.6$ through a probe of an earlier FiLM variant that modulates the compact prior, so under this parameterization the probed feature-level variant commands the larger change. Because the probe is not the reported per-block FiLM arm, it bears on the ordering only indirectly. The controls of Section~\ref{sec:controls} suggest
that structured severity inputs matter, but neither they nor the diagnostic
establish spatial localization of output corrections or a unique causal
mechanism.

Training fine-tunes all parameters from the DAB checkpoint (Algorithm~\ref{alg:scc}; settings in Section~\ref{sec:exp}). The only added inference cost is a single input channel, negligible beside
the backbone: widening the prior encoder's first convolution from $48$ to $64$ input
channels via a zero-initialized weight adds $9{,}216$ parameters, a $0.034\%$ increase
over the $26.9$M-parameter backbone ($26{,}911{,}934\!\rightarrow\!26{,}921{,}150$),
with correspondingly negligible change in floating-point operations and latency
(Section~\ref{sec:efficiency}).

\begin{algorithm}[t]
\caption{\method{} fine-tuning (one epoch)}
\label{alg:scc}
\begin{algorithmic}[1]
\Require DAB $f_\theta$; widened projection $W'$ with $W'[:,48{:}64]{=}\mathbf{0}$
\For{each mini-batch $(\xb,\yb)$}
  \State $\svec \gets$ severity map from $\xb$, $m_{\text{isp}}$ \Comment{Eq.~\eqref{eq:sev}}
  \State $\yhat \gets f_{\theta'}([\xb;\svec])$
  \State $\theta' \gets \theta' - \eta\,\nabla_{\theta'}\mathcal{L}_1(\yhat,\yb)$;\; update EMA
\EndFor

\end{algorithmic}
\end{algorithm}

\section{Experiments}
\label{sec:exp}
Evaluation is performed on the iPhoneBlur benchmark~\cite{iphoneblur}, comprising
$7{,}400$
blurred--sharp pairs from $51$ iPhone videos, stratified into Easy, Medium, and Hard
tiers and partitioned by source video into $36$ training and $15$ test captures,
giving $5{,}714$ training and $1{,}686$ test pairs with per-sample capture
metadata; formula, units, range, and provenance for every field are tabulated in
Appendix~A. Unless noted, the backbone is DiffIRS2~\cite{diffir}; all conditioned models and the schedule-matched baseline are fine-tuned from the same Domain-Adapted Baseline checkpoint, and
full-resolution test evaluation uses reflect-padding to a multiple of the network
stride. PSNR, SSIM~\cite{ssim}, and LPIPS~\cite{lpips} (AlexNet backbone) are reported
throughout.
Hyperparameters are held fixed across all mechanisms. The severity map of
\eqref{eq:sev} uses a $3\times3$ Sobel operator and $15\times15$ Gaussian smoothing
($\sigma=2.6$) of the gradient magnitude, normalized by its maximum with
$\epsilon=10^{-6}$; the scale is $\gamma=1.5-\tilde m$, with $\tilde m$ the min--max
normalized high-frequency statistic. The contrastive temperature is $\tau=0.1$ and the frequency weight $\lambda_f=0.1$. Fine-tuning uses AdamW ($\beta_1=0.9$, $\beta_2=0.99$, no weight
decay), a cosine schedule from $2\times10^{-5}$ (minimum $10^{-6}$) whose period is $34{,}284$ steps but which is stepped once per optimizer update ($17{,}136$ updates per arm), so the rate ends near $1.05\times10^{-5}$, and an EMA decay of
$0.999$. Full training configuration (epochs, batch/patch size, hardware, software,
seeds) is given in Appendix~G. Throughout, \emph{schedule-matched} means an identical number of
additional optimizer updates from the shared checkpoint, with identical schedule,
batch size, and augmentation, not matched floating-point operations, which differ
across mechanisms (Section~\ref{sec:efficiency}). Every model is evaluated at its
final epoch. The evaluation has
three parts: zero-shot cross-domain transfer, a fine-tuned comparison against six
baselines, and the conditioning study, followed by cross-backbone and per-image
significance analyses.

\subsection{Zero-Shot Transfer and the Domain Gap}
\label{sec:zeroshot}
Table~\ref{tab:zeroshot} quantifies the professional-to-consumer gap. A DiffIR model
trained on GoPro transfers to other professional blur datasets with the modest drops
typical of cross-evaluation, but on iPhoneBlur it falls to $27.22$~dB overall, a
cross-dataset gap consistent with the formation model of Section~\ref{sec:prelim},
although these experiments do not separate the consumer pipeline from other
differences between the datasets. This degradation motivates domain fine-tuning and, beyond it, the
descriptor-conditioning study that is the focus of this paper.

\begin{table}[t]
\centering
\caption{Zero-Shot Transfer of GoPro-Pretrained Models (PSNR, dB). Baseline values are from~\cite{iphoneblur}; DiffIR's GoPro and HIDE values are from~\cite{diffir} (RealBlur not reported) and its iPhoneBlur value is measured here. $^{\ast}$Evaluated here, not a new model.}
\label{tab:zeroshot}
\setlength{\tabcolsep}{2pt}
\small
\begin{tabular}{lccccc}
\toprule
Method & GoPro & HIDE & RealBlur-J & RealBlur-R & iPhoneBlur \\
\midrule
NAFNet~\cite{nafnet}       & $33.7$ & $31.3$ & $28.5$ & $35.1$ & $26.6$ \\
HINet~\cite{hinet}         & $32.9$ & $30.2$ & $27.2$ & $34.2$ & $26.8$ \\
Restormer~\cite{restormer} & $32.9$ & $31.2$ & $29.0$ & $35.4$ & $26.8$ \\
FFTformer~\cite{fftformer} & $34.0$ & $31.6$ & $29.2$ & $35.6$ & $26.9$ \\
\midrule
DiffIR~\cite{diffir}$^{\ast}$ & $33.20$ & $31.55$ & -- & -- & $27.22$ \\
\bottomrule
\end{tabular}
\end{table}

\subsection{Fine-Tuned Comparison}
\label{sec:finetune}
Fine-tuning on iPhoneBlur closes most of the gap. Table~\ref{tab:finetune} compares
the Domain-Adapted Baseline against six strong deblurring methods, all fine-tuned on
the target domain and evaluated at full resolution with stratified statistics. The
DAB reaches $31.23$~dB overall, competitive with or exceeding the strongest baselines, so the study of Section~\ref{sec:condstudy} starts from a strong base; the persistent Easy--Hard gap of roughly
seven decibels across every method reflects input difficulty, which an explicit
descriptor may help a model exploit.

\begin{table}[t]
\centering
\caption{Fine-Tuned Comparison on iPhoneBlur (PSNR, dB, per Tier; SSIM and LPIPS Overall). Baselines are published benchmark results~\cite{iphoneblur}, three-seed means, fine-tuned from GoPro weights. The DAB row is seed~1; its three-seed mean is $31.22$~dB (Section~\ref{sec:significance}).}
\label{tab:finetune}
\setlength{\tabcolsep}{2.5pt}
\footnotesize
\begin{tabular}{lcccccc}
\toprule
Method & Easy & Med. & Hard & Overall & SSIM & LPIPS \\
\midrule
MIMO-UNet~\cite{mimounet}    & $33.9$ & $29.7$ & $25.7$ & $29.3$ & $0.920$ & $0.075$ \\
Instruct-IR~\cite{instructir}& $34.5$ & $29.9$ & $25.4$ & $29.5$ & $0.924$ & $0.071$ \\
HINet~\cite{hinet}           & $34.6$ & $30.9$ & $27.1$ & $30.5$ & $0.938$ & $0.056$ \\
Restormer~\cite{restormer}   & $34.9$ & $31.2$ & $27.5$ & $30.9$ & $0.940$ & $0.053$ \\
FFTformer~\cite{fftformer}   & $35.0$ & $31.3$ & $27.7$ & $31.0$ & $0.943$ & $0.050$ \\
NAFNet~\cite{nafnet}         & $35.1$ & $31.6$ & $27.9$ & $31.2$ & $0.945$ & $0.049$ \\
\midrule
DAB (ours, DiffIR)           & $\mathbf{35.14}$ & $\mathbf{31.62}$ & $\mathbf{27.99}$ & $\mathbf{31.23}$ & $\mathbf{0.948}$ & $\mathbf{0.049}$ \\
\bottomrule
\end{tabular}
\end{table}

\subsection{Conditioning Study}
\label{sec:condstudy}
Table~\ref{tab:condstudy}, the central experiment, compares each configuration with the schedule-matched baseline. With the descriptor supplied at
inference, input-level severity-channel conditioning is the only configuration that
improves over the
schedule-matched baseline, on every tier ($+0.06$/$+0.08$/$+0.09$~dB on
Easy/Medium/Hard, $+0.07$~dB overall). Global FiLM modulation regresses substantially ($-0.32$~dB), its spatial SFT-style variant falls $0.03$~dB short, the output-level contrastive term regresses $0.12$~dB, and the spectral-loss control falls $0.02$~dB below baseline, within the range of the baseline's three seeds.
One possible reason (Section~\ref{sec:method}) is that the spectral term adds
gradient where an already-converged $\ell_1$ model retains little error. The full breakdown is in
Appendix~D. For every configuration that conditions the network, SSIM and LPIPS move with PSNR across the rows of
Table~\ref{tab:condstudy}, and the oracle input-level gain is consistent across all three metrics. The spectral-loss control is the exception: its PSNR is above both feature-level variants, yet its SSIM is the lowest in the table and its LPIPS is second only to global FiLM's. A plausible reason is that \eqref{eq:freq} constrains only the magnitude spectrum and leaves phase, which carries spatial structure, unpenalized. 

The feature-level arm was initialized near the baseline function under the specified parameterization (Section~\ref{sec:diag}). Its lower score in this run does not identify the cause of the difference: initialization, descriptor representation, auxiliary supervision, and optimization effects have not been independently isolated.

\begin{table*}[t]
\centering
\caption{Conditioning Study on iPhoneBlur (PSNR, dB).  SSIM and LPIPS are overall; $\Delta$ is the overall PSNR change. The descriptor is supplied at inference (oracle). The spectral-loss control receives no descriptor.}
\label{tab:condstudy}
\setlength{\tabcolsep}{6pt}
\small
\begin{tabular}{llccccccc}
\toprule
Level & Mechanism & Easy & Med. & Hard & Overall & $\Delta$ & SSIM & LPIPS \\
\midrule
--    & Schedule-matched DAB & $35.14$ & $31.62$ & $27.99$ & $31.23$ & -- & $0.948$ & $0.049$ \\
\midrule
Feat. & FiLM, global (near-identity~init) & $34.90$ & $31.28$ & $27.66$ & $30.91$ & $-0.32$ & $0.945$ & $0.053$ \\
Feat. & FiLM, spatial (SFT-style)    & $35.12$ & $31.59$ & $27.98$ & $31.20$ & $-0.03$ & $0.947$ & $0.050$ \\

Out.  & Contrastive & $35.04$ & $31.49$ & $27.87$ & $31.11$ & $-0.12$ & $0.946$ & $0.051$ \\
Ctrl. & Spectral-loss control (no descriptor) & $35.10$ & $31.59$ & $28.02$ & $31.21$ & $-0.02$ & $0.944$ & $0.052$ \\
\midrule
\rowcolor{green!8}
In.   & \textbf{\method{} (ours)} & $\mathbf{35.20}$ & $\mathbf{31.70}$ & $\mathbf{28.08}$ & $\mathbf{31.30}$ & $\mathbf{+0.07}$ & $\mathbf{0.949}$ & $\mathbf{0.048}$ \\
\bottomrule
\end{tabular}
\end{table*}

\subsection{Descriptor and Capacity Controls}
\label{sec:controls}
A modest gain invites the question of whether it comes from the descriptor's
\emph{information} or from a widened convolution's capacity.
Table~\ref{tab:controls} separates the two, holding architecture, recipe, and
budget fixed and corrupting only the fourth channel. The random-constant control
replaces the severity map with a per-image
random scalar broadcast to a uniform map: identical parameter count, no information
about the image. It recovers only $+0.01$~dB of \method{}'s $+0.07$~dB, suggesting that the gain tracks the descriptor's content rather than added capacity. Relative to the baseline's own seed variance (Section~\ref{sec:significance}), the random-constant run falls \emph{inside} the
range of DAB's three seeds, $[31.18, 31.26]$, whereas \method{}'s reported run lies
above it; single runs are indicative only.
In this run, added capacity alone did not reproduce the effect.

The shuffled-severity control addresses a finer question. Permuting the map's
pixels preserves its marginal distribution, and so the global severity level,
while destroying its spatial correspondence with the image. Shuffling collapses the gain to $-0.01$~dB, so \method{} exceeds its own shuffled control by $0.08$~dB. The global level alone, preserved by the shuffle, is therefore not what the
descriptor contributes. Because permutation also destroys spatial smoothness and
autocorrelation, this indicates that structured severity inputs matter; it does
not isolate residual alignment as the cause. 

The severity map of \eqref{eq:sev} itself multiplies an image-gradient term by a
global scale factor $\gamma$, and this construction admits the same style of ablation:
Table~\ref{tab:controls}
also reports the image-gradient term alone and the scale factor $\gamma$ alone (a spatially uniform map).
Neither component alone recovers the gain: the image-gradient term, which keeps the spatial
gradient pattern but drops the per-image global scale,
falls to $31.22$~dB ($-0.01$~dB); the scale factor $\gamma$ alone, which preserves the global
severity level but carries no spatial information, lands at $31.23$~dB ($0.00$~dB),
exactly on the baseline. All four corrupted or partial variants in
Table~\ref{tab:controls} lie within the range of the baseline's seeds; only the full product lies above that range. Each is a single fine-tuning run
corrupting a different aspect of the descriptor, so they are not replications and
do not substitute for repeated seeds. Together these single runs are consistent with a joint dependency: in this
configuration, neither the spatial term nor the global scale of \eqref{eq:sev}
reproduced the gain on its own.

\begin{table}[t]
\centering
\caption{Descriptor Ablations (Overall PSNR, dB). All runs share \method{}'s architecture, recipe, and budget and differ only in the fourth channel; single runs at seed~1. The baseline's three seeds span $[31.18,31.26]$.}
\label{tab:controls}
\setlength{\tabcolsep}{3pt}
\small
\begin{tabular}{llcc}
\toprule
Fourth-channel content & Information & PSNR & $\Delta$ \\
\midrule
None (schedule-matched DAB)  & --              & $31.23$ & -- \\
\midrule
Random constant              & none                & $31.24$ & $+0.01$ \\
Shuffled severity            & global (shuffled)   & $31.22$ & $-0.01$ \\
Image gradient only          & spatial only        & $31.22$ & $-0.01$ \\
Scale factor $\gamma$ only    & global (uniform)    & $31.23$ & $0.00$ \\
\rowcolor{green!8}
True severity (\method{})    & global + spatial    & $\mathbf{31.30}$ & $\mathbf{+0.07}$ \\
\bottomrule
\end{tabular}
\end{table}

\subsection{Cross-Backbone Generalization}
\label{sec:crossbackbone}
To test whether the finding is specific to a diffusion-prior backbone, the
input-level experiment is repeated on Restormer~\cite{restormer}, a transformer,
with the first convolution widened from three to four channels and the added
weights zero-initialized. For Restormer, the baseline and the conditioned model are
fine-tuned from the same GoPro-pretrained weights for an identical $20$ epochs, whereas the DiffIR comparison branches both arms from the converged DAB with
an identical additional budget. As a sanity check, the Restormer baseline
($30.95$~dB) reproduces the published fine-tuned result on this benchmark
($30.9$~dB, Table~\ref{tab:finetune}) to within $0.05$~dB.

Table~\ref{tab:crossbackbone} reports the result. Severity-channel conditioning
improves Restormer on every tier ($+0.09$~dB overall), in the same direction as on
DiffIR ($+0.07$~dB), so the input-level result is not specific to one backbone. A
conditioned NAFNet~\cite{nafnet} is also shown for context. Its baseline is the
published three-seed benchmark result~\cite{iphoneblur} rather than an independently
documented same-budget continuation in this study, so it is not used to establish a
matched improvement.

\begin{table}[t]
\centering
\caption{Cross-Backbone Comparison (PSNR, dB; Oracle Descriptor). Each \method{} model is compared with its backbone's baseline. The NAFNet baseline (Publ.) is the published three-seed benchmark result~\cite{iphoneblur}, shown at its published precision for context only; no matched difference is computed for it.}
\label{tab:crossbackbone}
\setlength{\tabcolsep}{3pt}
\small
\begin{tabular}{llccccc}
\toprule
Backbone & Model & Easy & Med. & Hard & Overall & $\Delta$ \\
\midrule
\multirow{2}{*}{DiffIR (diffusion)}
 & DAB          & $35.14$ & $31.62$ & $27.99$ & $31.23$ & -- \\
 & \method{}    & $35.20$ & $31.70$ & $28.08$ & $31.30$ & $+0.07$ \\
\midrule
\multirow{2}{*}{NAFNet (conv.)}
 & Publ.        & $35.1$  & $31.6$  & $27.9$  & $31.2$  & -- \\
 & \method{}    & $35.26$ & $31.72$ & $28.03$ & $31.31$ & -- \\
\midrule
\multirow{2}{*}{Restormer (transf.)}
 & DAB          & $34.92$ & $31.32$ & $27.72$ & $30.95$ & -- \\
 & \method{}    & $35.00$ & $31.42$ & $27.80$ & $31.04$ & $+0.09$ \\
\bottomrule
\end{tabular}
\end{table}

\subsection{Per-Image Significance}
\label{sec:significance}
Aggregate PSNR can conceal whether a small gain is broad or driven by a few
images. Table~\ref{tab:wilcoxon} reports a paired per-image analysis of \method{}
against the schedule-matched DAB on DiffIR: for each of the $1{,}686$ test images the
PSNR difference is taken and tested with a two-sided Wilcoxon signed-rank
test~\cite{wilcoxon}. \method{} wins on $87.2\%$ of
images overall and on more than $86\%$ within every tier, and the rank-biserial
effect size~\cite{kerby2014simple} of $r\approx0.72$--$0.76$ indicates that the
effect is consistent in direction rather than large in magnitude. The bootstrap
confidence interval~\cite{efron1979bootstrap} on the overall mean,
$[0.067,0.073]$~dB, excludes zero while remaining consistent with the modest magnitude reported throughout.

Test frames are drawn from
$15$ held-out source videos, so frames within a video are not mutually
independent, and the frame-level significance above is therefore optimistic about
generalization to \emph{new videos}. The correct unit for that question is the
video: aggregating to per-video means gives a Wilcoxon $p=0.0002$, with $14$
videos favouring \method{} and one near-zero ($-0.003$~dB); a percentile bootstrap
over video means gives an equal-video $95\%$ CI of $[0.044,0.084]$~dB for the
mean of per-video means ($+0.065$~dB; the frame-weighted mean is $+0.072$~dB), and an exact sign-flip
permutation test agrees ($p=0.0002$). This describes between-video consistency for
the evaluated model pair; it does not remove training-run uncertainty. Excluding the smallest video (two frames)
changes this negligibly ($p=0.0004$; CI $[0.041,0.082]$~dB). Figure~\ref{fig:violin}
shows this distribution; per-video means and frame counts are in Appendix~F. The
paired design compares the two arms on identical test images, which removes
image-to-image variance but not variance between training runs.

Table~\ref{tab:multiseed} reports three training seeds per arm. Over
three seeds per arm the ranges do not overlap and the gain of means is
$+0.10$~dB, larger than the $+0.07$~dB single-run figure; the reported run is seed~1 of
each arm ($31.23$ and $31.30$~dB), and the difference reflects seed-to-seed
variation. Seed variance is smaller than the benchmark's reported figure since
all runs fine-tune from a fixed checkpoint rather than from scratch. With three runs
per arm a paired test over the three seed pairs ($t=3.04$, $\mathrm{df}=2$, $p=0.093$, 95\% CI $[-0.040,+0.234]$) has
little power, so this is reported as corroboration, not a primary claim. 

\begin{table}[t]
\centering
\caption{Per-Image Significance of \method{} Over the Schedule-Matched DAB on
iPhoneBlur (DiffIR backbone). $p$-value is a two-sided Wilcoxon signed-rank test;
$r$ is the rank-biserial correlation. Oracle descriptor at inference; frame-level, see Section~\ref{sec:significance} for the video-clustered check.}
\label{tab:wilcoxon}
\setlength{\tabcolsep}{2.5pt}
\scriptsize
\begin{tabular}{lccccccc}
\toprule
Tier & $N$ & Med.~$\Delta$ (dB) & Win \% & $p$-value & 95\% CI (dB) & $r$ \\
\midrule
Easy    & $280$  & $+0.051$ & $86.8$ & $<10^{-33}$  & -- & $0.74$ \\
Medium  & $953$  & $+0.079$ & $87.9$ & $<10^{-108}$ & -- & $0.76$ \\
Hard    & $453$  & $+0.093$ & $86.1$ & $<10^{-50}$  & -- & $0.72$ \\
\midrule
\textbf{Overall} & $\mathbf{1686}$ & $\mathbf{+0.079}$ & $\mathbf{87.2}$ & $\mathbf{<10^{-188}}$ & $\mathbf{[0.067,0.073]}$ & $\mathbf{0.74}$ \\
\bottomrule
\end{tabular}
\end{table}

\begin{table}[t]
\centering
\caption{Multi-Seed Comparison on iPhoneBlur (Overall; Oracle Descriptor). Three seeds per arm; only data order and augmentation differ. Paired test over seeds: $p=0.093$ (Section~\ref{sec:significance}).}
\label{tab:multiseed}
\setlength{\tabcolsep}{4pt}
\small
\setlength{\tabcolsep}{3pt}
\footnotesize
\begin{tabular}{llccc}
\toprule
Model & Seed & PSNR & SSIM & LPIPS \\
\midrule
\multirow{4}{*}{DAB}
 & 1 & $31.23$ & $0.948$ & $0.049$ \\
 & 2 & $31.18$ & $0.947$ & $0.050$ \\
 & 3 & $31.26$ & $0.949$ & $0.048$ \\
 & \textbf{mean} & $31.22{\scriptstyle\,\pm 0.04}$ & $0.948{\scriptstyle\,\pm 0.001}$ & $0.049{\scriptstyle\,\pm 0.001}$ \\
\midrule
\multirow{4}{*}{\method{}}
 & 1 & $31.30$ & $0.949$ & $0.048$ \\
 & 2 & $31.34$ & $0.952$ & $0.050$ \\
 & 3 & $31.32$ & $0.950$ & $0.048$ \\
\rowcolor{green!8}
 & \textbf{mean} & $\mathbf{31.32}{\scriptstyle\,\pm 0.02}$ & $\mathbf{0.950}{\scriptstyle\,\pm 0.002}$ & $\mathbf{0.049}{\scriptstyle\,\pm 0.001}$ \\
\bottomrule
\end{tabular}
\end{table}

\begin{figure}[t]
\centering
\includegraphics[width=\columnwidth]{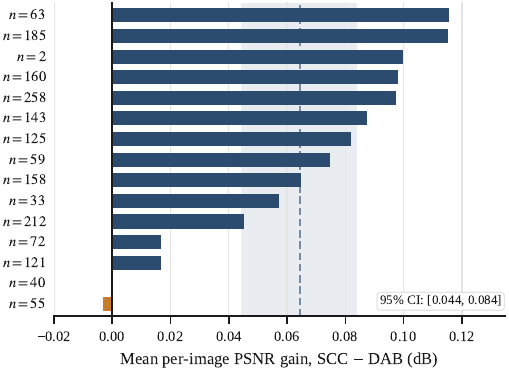}
\caption{Mean PSNR gain ($\method - \text{DAB}$, oracle descriptor) per held-out video, sorted by effect size; $n$ is the number of frames per video. The shaded band is the $95\%$ equal-video bootstrap CI. Fourteen of $15$ videos favour \method{}.}
\label{fig:violin}
\end{figure}

\subsection{Efficiency}
\label{sec:efficiency}
Table~\ref{tab:eff} places the input-level mechanism's cost in context against the
alternatives, across the backbones of Table~\ref{tab:crossbackbone}. The severity
channel widens a single convolution, adding a small fraction of a percent of
parameters ($9{,}216$, $0.034\%$, on DiffIR; $576$ on NAFNet; $432$ on
Restormer; the per-backbone differences follow from where each reads the image,
detailed in Appendix~G). Latency is unaffected on every backbone, within
measurement noise. FiLM requires $1.68$M additional parameters,
$182\times$ \method{}'s overhead, and still regresses $0.32$~dB below the baseline.
Making the control spatial rather than global costs more, not less: the SFT-style
variant adds $4.05$M parameters, roughly $440\times$ \method{}'s, and increases
FLOPs by $45.7\%$ and measured latency by a closely corresponding $43.4\%$, since
spatial modulation applies a real convolution at every pixel of every conditioned
block; this is the only mechanism in Table~\ref{tab:eff} whose latency tracks its
compute increase, and it still falls short of the baseline. The cheapest configuration is also the only one that improves over the baseline; added capacity does not account for the gain.

\begin{table}[t]
\centering
\caption{Inference Cost at $720\times1280$ on One GPU. The contrastive head and the spectral loss are used only in training, so those rows equal the DAB. The lower block covers the other two backbones (Section~\ref{sec:crossbackbone}).}
\label{tab:eff}
\setlength{\tabcolsep}{3pt}
\footnotesize
\begin{tabular}{lcccc}
\toprule
Mechanism & Params (M) & Added (M) & FLOPs (G) & Lat. (ms) \\
\midrule
\multicolumn{5}{l}{\textit{DiffIR backbone}} \\
Schedule-matched DAB  & $26.912$ & --      & $3380.5$ & $3397$ \\
\midrule
FiLM, global & $28.593$ & $1.681$  & $3380.5$ & $3454$ \\
FiLM, spatial (SFT)  & $30.962$ & $4.05$   & $4926.5$       & $4870$   \\
Contrastive          & $26.912$ & $0$      & $3380.5$ & $3397$ \\
Spectral-loss control & $26.912$ & $0$      & $3380.5$ & $3397$ \\
\rowcolor{green!8}
\method{} (ours)     & $26.921$ & $0.0092$ & $3384.7$ & $3381$ \\
\midrule
\multicolumn{5}{l}{\textit{Cross-backbone (Section~\ref{sec:crossbackbone})}} \\
NAFNet (publ.)         & $67.889$ & --      & $1780.8$ & $1179$ \\
\rowcolor{green!8}
\method{} (NAFNet)   & $67.889$ & $0.0006$ & $1781.9$ & $1190$ \\
DAB (Restormer)      & $26.127$ & --      & $3965.4$ & $4351$ \\
\rowcolor{green!8}
\method{} (Restormer)& $26.127$ & $0.0004$ & $3966.2$ & $4304$ \\
\bottomrule
\end{tabular}
\end{table}

\subsection{Qualitative Results}
\label{sec:qualitative}
Figure~\ref{fig:quali} compares the blurred input, the \method{} restoration, and the
ground truth across the difficulty spectrum, with two magnified regions per example.
\method{} recovers edges, text, and repeated texture, most clearly on Easy and Medium inputs.
The figure does not include the DAB column, so it illustrates restoration quality
rather than the $+0.07$~dB margin over the baseline, which is quantified in
Table~\ref{tab:condstudy}; a DAB comparison and failure cases are in Appendix~H.

\begin{figure*}[t]
\centering
\includegraphics[width=\textwidth]{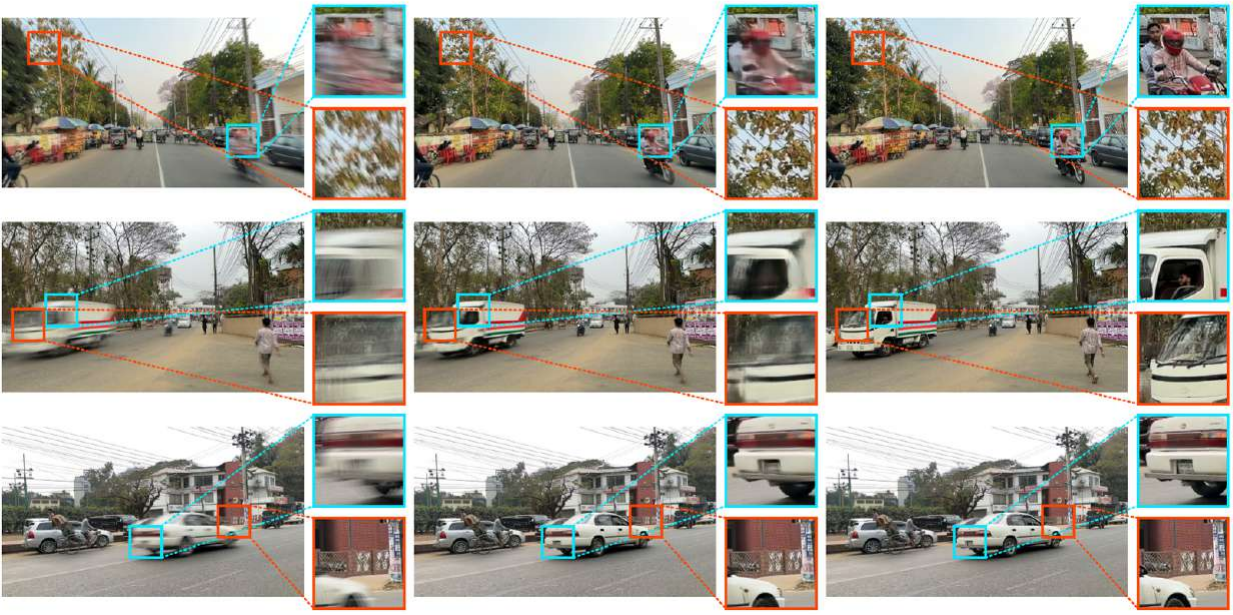}
\caption{Qualitative comparison on iPhoneBlur. Columns: blurred input, \method{}
restoration, ground truth; two magnified crops accompany each example (orange and
cyan boxes). Rows span the difficulty spectrum. }
\label{fig:quali}
\end{figure*}

\section{Discussion}
\label{sec:disc}
Three properties characterize the oracle input-level gain: the sign
is uniform across tiers; the direction holds on $87\%$ of images
(Table~\ref{tab:wilcoxon}) and on both matched backbones; and it tracks the
descriptor's content rather than added capacity, since a random constant recovers
only $+0.01$~dB.

The gain must be read alongside an exploratory inference ablation. Feeding zeros through the
fourth channel instead of the true severity map lowers overall PSNR by $0.02$~dB
($31.30\!\rightarrow\!31.28$), with SSIM falling from $0.949$ to $0.944$ and LPIPS
unchanged at $0.048$. The zeroed configuration sits $0.05$~dB above the baseline's single-run value,
so in this run most of the oracle gain survives without the descriptor, at an SSIM cost of
$0.004$ relative to the baseline. It uses one trained checkpoint, so it is an exploratory inference-ablation result, and no conclusion about deployment performance across seeds or videos is drawn from it. One explanation, which these experiments do not test directly, is that the channel
acts mainly during \emph{training}, so that the trained weights carry most of the
benefit without the descriptor at inference. Removing the fourth channel entirely, reverting to a three-channel first
convolution, gives the identical result. That equality is not incidental: both
configurations contribute $W_{\text{sev}}\!\cdot\!\mathbf{0}$ at the first
convolution and are algebraically the same computation. Under the
determinism-pinned protocol of Appendix~E the two agree exactly, to every reported
decimal and to zero maximum pixel difference. In this run, the trained network relies on the descriptor only slightly at inference. The severity map is used here as \emph{privileged information}~\cite{vapnik2009lupi}: a signal available during fine-tuning, derived in part from the sharp target, and unavailable at deployment, in the spirit of generalized
distillation~\cite{lopezpaz2016distillation} and its application to
super-resolution~\cite{lee2020pisr}.

The largest per-image regressions, the negative tail of
Figure~\ref{fig:violin}, share an interpretable structure: blur confined to a
small foreground region against a sharp background, such as a fast-moving
subject crossing a static scene. Because $\gamma(m_{\text{isp}})$ in \eqref{eq:sev} is a single scalar per image, it does not distinguish concentrated from diffuse blur, and on such scenes the descriptor can
be miscalibrated. This is likely a limitation of the descriptor rather than of input-level conditioning, and it suggests descriptors encoding the spatial concentration of
degradation, not only its average magnitude. More broadly, the study evaluates few configurations per level, and global FiLM is a global modulation: one affine pair per channel, uniform across space, which leaves open
whether a spatially varying feature-level mechanism might close the gap to the input
level. The spatial feature-transform row of Table~\ref{tab:condstudy} addresses this:
an SFT-style modulation driven by the same severity map, under the identical
protocol, closes most of global FiLM's gap ($-0.32\!\rightarrow\!-0.03$~dB) but
still does not exceed the baseline, and its margin is smaller than the baseline's own seed spread. A spatial control therefore recovers most of what a global one discards without
closing the gap; both feature-level variants act on already-computed features
rather than on the input. This broadens the observation from one instantiation to two, though it remains a
statement about the variants evaluated
here rather than about feature-level conditioning in general; learned modulation,
alternative contrastive constructions, and richer descriptors remain untested.
The matched cross-backbone evidence covers diffusion and transformer backbones (Table~\ref{tab:crossbackbone}), but the descriptor itself is validated on a
single benchmark, since iPhoneBlur is, to the authors' knowledge, the only paired
consumer-device
deblurring dataset that ships the capture metadata the descriptor requires; whether
the finding transfers to other consumer pipelines remains untested. 

The Sobel-gradient term of
\eqref{eq:sev} needs only the blurred image; the scalar $m_{\text{isp}}$ is
computed from paired data (Section~\ref{sec:prelim}) and is unavailable in deployment.
The descriptor as defined is therefore a training-time signal. A deployable model would zero the channel; this paper reports only an exploratory single-run ablation of that setting, and its evaluation across seeds and videos is left to future work.
Establishing an input-only descriptor, whether by a blind high-frequency
estimator or by a vendor statistic where one is documented, is the natural route to
per-image adaptation at test time and is not claimed here. Code, configs, and
per-image predictions are described in Appendix~I.

\section{Conclusion}
\label{sec:concl}

This work framed consumer-device motion deblurring as a domain-adaptation problem and asked, for a measured degradation descriptor, at which level of the network it should enter. Descriptor-conditioned deblurring is established; what has not been reported for motion deblurring is a matched-budget comparison of input-, feature-, and output-level conditioning. Under that protocol and with the descriptor supplied at inference, input-level severity-channel conditioning is the only evaluated configuration that exceeds the schedule-matched baseline, consistently across difficulty tiers, while the evaluated feature-level variants, including a spatially varying control that costs more than its global counterpart, the output-level contrastive term, and an unconditioned spectral-loss control do not. The gain is small: per-video means give $p=0.0002$, with 14 of 15 videos favouring SCC for the evaluated model pair, but three training seeds do not establish robustness to retraining, and single-run controls only suggest that the effect tracks the descriptor's content rather than added parameters. The descriptor is partly derived from the paired sharp target and is therefore unavailable in deployment; in one exploratory run, zeroing it at inference cost $0.02$~dB of PSNR and lowered SSIM slightly below the baseline. Evaluating that deployable setting across seeds and videos, developing a descriptor computable from the blurred input alone, and extending evaluation beyond one consumer benchmark are natural directions for future work.

The code and per-image predictions are publicly available at
\url{https://github.com/C-loud-Nine/Severity-Channel-Conditioning}.

\section*{Conflict of Interest}
The authors declare that they have no conflicts of interest relevant to this work.

\bibliographystyle{IEEEtran}
\bibliography{main}

\clearpage
\twocolumn[{\centering{\LARGE Supplementary Material}\par\vspace{1.5em}}]

\noindent This supplementary material accompanies the main paper above; equations, tables, figures, and propositions are numbered
consecutively with the main paper. All cross-references of the form
``Appendix~X'' in the main text refer to the corresponding section below.

\section*{Appendix A: Metadata Descriptor Provenance}
\label{app:provenance}
Table~\ref{tab:provenance} gives the exact formula, units, observed range on the
$1{,}686$-image test set, and provenance label for the two terms of the severity map and for the motion and
blur-window fields introduced in Section~III of the main paper; the remaining two
metadata fields are described below the table. The
provenance vocabulary is: \textbf{x-only} (computable from the blurred input
alone, hence available at deployment), \textbf{y-derived} (requires the paired
sharp ground truth, hence training-time only), \textbf{adjacent-frame} (requires
neighbouring video frames beyond the single blurred frame), \textbf{EXIF/ISP}
(exposed by the capture pipeline's own metadata), and \textbf{synthesis-only}
(a property of how the benchmark's blur pairs were synthesized, with no
device-side analogue).

\begin{table*}[t]
\centering
\caption{Metadata Descriptor Field Provenance}
\label{tab:provenance}
\small
\begin{tabular}{p{3.4cm}p{8.0cm}p{2.6cm}p{2.4cm}}
\toprule
Field & Formula / Definition & Units & Provenance \\
\midrule
Paired high-frequency difference ($m_{\text{isp}}$, key \texttt{isp\_diff}) &
high-frequency energy (mean absolute Laplacian of luminance, as defined by the benchmark~\cite{iphoneblur}) of the blurred observation $\xb$ minus that of the sharp image $\yb$, min--max normalized over all released metadata, training and test (range $[-3830.9,-33.6]$; transductive) & difference of Laplacian energies & y-derived \\
Image gradient term $g(\xb)$ & Gaussian-smoothed Sobel gradient-magnitude map of
$\xb$ (Eq.~(2) of the main text); high in low-gradient regions, blurred or naturally smooth &
dimensionless (gradient magnitude) & x-only \\
Optical-flow motion magnitude (\texttt{motion}) & Mean Farneb\"ack dense optical
flow magnitude averaged over the synthesis window: $m_i = \frac{1}{W_i}
\sum_{t=t_s}^{t_e} \mathbb{E}_{(x,y)}[\lVert v_t(x,y)\rVert]$, where $v_t$ is
dense flow between consecutive source frames~\cite{iphoneblur,farneback} &
px/frame (range $0.0$--$28.1$) & adjacent-frame \\
Temporal blur window (\texttt{blur\_window}) & Adaptive number of consecutive
high-frame-rate source frames averaged to synthesize each blurred
sample~\cite{iphoneblur} & frames (range $3$--$21$) & synthesis-only \\
\bottomrule
\end{tabular}
\end{table*}

\noindent \textbf{Scope of specification.} The two terms \method{} itself uses,
$m_{\text{isp}}$ (the benchmark's \texttt{isp\_diff}) and the image-gradient term
$g(\xb)$, are fully specified above by Eq.~(2) of the main text, and the
optical-flow magnitude and temporal blur window are specified from the
benchmark's published metadata schema~\cite{iphoneblur}. The remaining two
quantities named in Section~III are released as the metadata columns
\texttt{blur\_duration\_ms} and \texttt{noise\_estimate}. The former is the temporal extent of the synthesis window, the window length (3--21 frames) divided by the source video's frame rate, in milliseconds ($12.5$--$120.7$~ms); the latter is a per-sample sensor-noise estimate included in the benchmark's metadata files ($1.71$--$8.38$); the schema and sample records used are supplied with the review materials. No field has missing values. The FiLM arm z-scores its five inputs with training-split statistics ($n=5{,}714$; mean / SD): motion $11.081$ / $5.561$; paired high-frequency difference $-1399.748$ / $673.741$; blur window $7.217$ / $6.125$; blur duration $36.753$ / $32.318$; noise estimate $4.748$ / $1.123$. The \method{} scale instead min--max normalizes the paired high-frequency difference over all released metadata, range $[-3830.9,-33.6]$. No result reported for \method{} depends on them.

\medskip
\noindent The provenance column matters for the deployment discussion of Sections~III and VII. A field marked synthesis-only is available at benchmark-construction
time but not at capture time, and a field derived from $\yb$ is unavailable at
inference by definition. Of the terms \method{} uses, the image gradient is
computable from the blurred input alone, while $m_{\text{isp}}$ is
$\yb$-derived; Section~VII of the main text evaluates the setting in which the whole channel is zeroed at inference.

\section*{Appendix B: Full Proof of Proposition 2 (Value of Side Information)}
\label{app:prop2proof}
This appendix gives the complete derivation of Proposition 2 (main text,
Section~V), whose main-text proof is an abbreviated sketch.

\begin{proof}[Full proof]
Let $\mathcal{F}$ denote the set of measurable functions $f:\mathcal{X}\to
\mathcal{Y}$ and $\mathcal{G}$ the set of measurable functions
$g:\mathcal{X}\times\mathcal{M}\to\mathcal{Y}$. Every $f\in\mathcal{F}$ induces a
$g\in\mathcal{G}$ via $g(\xb,\mvec) := f(\xb)$, so $\mathcal{F}\subseteq
\mathcal{G}$ under this embedding, and consequently
\[
R_{\xb,\mvec} = \inf_{g\in\mathcal{G}} \mathbb{E}[\ell(g(\xb,\mvec),\yb)]
\;\le\; \inf_{f\in\mathcal{F}} \mathbb{E}[\ell(f(\xb),\yb)] = R_{\xb}.
\]
For the gap, assume both risks are finite and attained, and define the conditional
risk $r(a;\xb,\mvec)=\mathbb{E}[\ell(a,\yb)\mid\xb,\mvec]$ and its infimum
$r^\star(\xb,\mvec)=\inf_a r(a;\xb,\mvec)$, attained by a measurable selection
$g^\star$. Let $f^\star$ be any minimizer of the $\xb$-only risk. By the tower
property, $R_{\xb}=\mathbb{E}[r(f^\star(\xb);\xb,\mvec)]$ and
$R_{\xb,\mvec}=\mathbb{E}[r^\star(\xb,\mvec)]$. Since $f^\star(\xb)$ is a feasible
action for every $(\xb,\mvec)$, the conditional excess risk
$\Delta(\xb,\mvec)=r(f^\star(\xb);\xb,\mvec)-r^\star(\xb,\mvec)$ is nonnegative, and
\[
R_{\xb}-R_{\xb,\mvec}=\mathbb{E}[\Delta(\xb,\mvec)]\;\ge\;0 .
\]
The inequality is strict exactly when $\Delta>0$ on a set of positive probability.
This condition concerns risks, not actions: when the conditional minimizer is not
unique, as is possible under $\ell_1$ loss, $g^\star$ may differ from $f^\star$ while
$\Delta=0$. For example, with $\yb\in\{0,1\}$ equiprobable and $\xb,\mvec$
uninformative, both $0$ and $1$ minimize absolute loss with risk $0.5$. The
proposition concerns access to $\mvec$ at prediction time and makes no claim about
a model trained with $\mvec$ and evaluated without it.
\end{proof}

\section*{Appendix C: Normalized-Step Diagnostic (Hypothesis 1)}
\label{app:hyp1}
This appendix reports the measurements referenced in Section~V (Hypothesis~1) of
the main text, comparing the input-channel mechanism (\method) against the
feature-level mechanism (FiLM) at the shared Domain-Adapted Baseline checkpoint.

\textbf{Protocol.} Both arms are measured on the same $64$ training crops of
$256\times256$ px, with identical crop coordinates, the same $\ell_1$ objective,
and all other weights held fixed. For \method{} the measured parameters are the
zero-initialized severity columns of the widened prior-encoder projection
($9{,}216$ weights); for FiLM they are the eight tensors ($18{,}112$ parameters) of an earlier FiLM variant that modulates DiffIR's compact prior from four metadata fields, attached to the same DAB checkpoint and reset to identity ($\gamma$ bias $1$, $\beta$ bias $0$) so that both mechanisms are
measured in their initialization state. The gradient norm is normalized by
parameter count; the signal-to-noise ratio is the mean gradient magnitude over
its standard deviation across the $64$ samples; the functional output change is
the mean absolute pixel-space change in $\yhat$ after one step of size
$10^{-3}$ along the normalized gradient direction.

\begin{table*}[t]
\centering
\caption{Normalized-Step Diagnostic: Input Channel vs.\ FiLM at the Shared DAB Checkpoint}
\label{tab:hyp1}
\small
\begin{tabular}{lccc}
\toprule
Mechanism & Grad.\ norm (normalized)$^{\dagger}$ & Grad.\ SNR & Output change (intensity, $[0,255]$) \\
\midrule
Input (\method) & $1.25\times10^{-5}$ & $0.128$ & $0.039$ \\
Feature (FiLM)  & $2.35\times10^{-2}$ & $0.131$ & $21.637$ \\
\bottomrule
\end{tabular}

\vspace{2pt}
\parbox{0.9\textwidth}{\footnotesize $^{\dagger}$The gradient-norm column is not
comparable across the two rows and is reported for completeness only. The
\method{} parameter is a convolution weight multiplying a severity input in
$[0,1]$, whereas FiLM's are biases on a multiplier applied to feature
activations, so their raw magnitudes differ by the activation scale rather than
by any property of the mechanisms. The output-change column depends on the chosen
parameterization and is a sensitivity, not a ranking.}
\end{table*}

\noindent \textbf{Result.} One normalized gradient step changes the restored image
by a mean absolute $0.039$ intensity units (on $[0,255]$) through the input
channel and by $21.637$ through the probed FiLM tensors, roughly $550$ times larger for FiLM. The
gradient signal-to-noise ratios are near-identical ($0.128$ against $0.131$), so
neither estimate is the noisier of the two.

\medskip
\noindent \textbf{Interpretation.} The original expectation, that the input level
prevails because it receives the larger gradient, is not supported: under this diagnostic the probed FiLM variant produces the larger output change. Three limits apply. First, a unit step in two differently scaled
parameter spaces is not an equal functional perturbation, so the ratio is a
sensitivity under the chosen parameterization, not a scale-free ranking; a single
step also does not reproduce AdamW training dynamics. Second, the probe uses that earlier prior-modulating FiLM variant ($18{,}112$ parameters, reset to identity), not the per-block FiLM arm of Table~III ($1.68$M parameters), so it bears on the gradient-magnitude hypothesis only indirectly. Third, the diagnostic measures magnitude, not where changes occur. The
controls of Section~VI-D, a shuffled map and a random constant, indicate that
structured severity inputs matter, but neither they nor this diagnostic establish
spatial localization of output corrections or a unique causal mechanism.

\section*{Appendix D: Spectral-Loss Control (No Descriptor): Full Breakdown}
\label{app:objarm}
This appendix reports the complete per-tier and per-metric results for the
unconditioned spectral-loss control referenced in
Section~VI-C (Table~III) of the main text.

The arm adds only a fixed-weight ($\lambda_f=0.1$) spectral loss to the
unmodified three-channel baseline. An earlier configuration that also retained the
severity channel ($31.19$~dB overall) combined two mechanisms and is excluded.

\begin{table}[h]
\centering
\caption{Spectral-Loss Control (No Descriptor), Full Breakdown (PSNR dB / SSIM / LPIPS)}
\label{tab:objfull}
\footnotesize
\begin{tabular}{lccc}
\toprule
Tier & PSNR & SSIM & LPIPS \\
\midrule
Easy    & $35.100$ & $0.9710$ & $0.0310$ \\
Medium  & $31.589$ & $0.9455$ & $0.0500$ \\
Hard    & $28.020$ & $0.9253$ & $0.0703$ \\
\midrule
Overall & $31.213$ & $0.9443$ & $0.0523$ \\
\bottomrule
\end{tabular}
\end{table}

\noindent \textbf{Result.} $\Delta$ vs.\ the schedule-matched baseline
($31.23$~dB): $-0.02$~dB. The arm lies within the range of the baseline's three
seeds, $[31.18, 31.26]$; this is a descriptive comparison, not an equivalence test. The weighted tier
mean reproduces the reported overall figure exactly.

\medskip
\noindent The per-tier breakdown adds detail the aggregate row cannot show. The
deviation from the schedule-matched baseline is small on every tier and changes
sign across them: $-0.04$~dB on Easy, $-0.03$~dB on Medium, and $+0.03$~dB on
Hard. No tier deviates by more than $0.04$~dB, and the signs do not agree, so the
arm is not uniformly harmful in the way the feature-level mechanism is
(Table~III).

\section*{Appendix E: Zero-vs-Drop Determinism Audit}
\label{app:zerodrop}
This appendix reports the full protocol and results for the determinism-pinned
zero-vs-drop audit referenced in Section~VII (Discussion) of the main text.

\textbf{Protocol.} Both conditions use the identical trained \method{}
checkpoint. \emph{Zeroed}: the severity channel is set to zero at inference.
\emph{Dropped}: the weight slice $W'[:,48{:}64]$ of the first projection, which
receives the sixteen unshuffled severity channels, is masked to zero (equivalent to
reverting to the 48-channel DAB projection). For both conditions:
\texttt{torch.use\_deterministic\_algorithms(True)}, \texttt{cudnn.deterministic
= True}, \texttt{cudnn.benchmark = False}; a single fixed \texttt{manual\_seed =
0} is set immediately before each forward pass, and the diffusion
prior's initial latent $\hat{Z}_T$ is drawn once and reused identically across
both conditions.

\begin{table*}[t]
\centering
\caption{Zero-vs-Drop Determinism Audit}
\label{tab:zerodrop}
\small
\begin{tabular}{lccc}
\toprule
Condition & PSNR (dB) & Max.\ abs.\ pixel diff. & SSIM / LPIPS \\
\midrule
True severity channel & $31.30$ & -- & $0.949$ / $0.048$ \\
Zeroed (this audit)   & $31.28$ & $0$ & $0.944$ / $0.048$ \\
Dropped (this audit)  & $31.28$ & $0$ & $0.944$ / $0.048$ \\
\bottomrule
\end{tabular}
\end{table*}

\noindent \textbf{Result.} The two configurations agree exactly. PSNR, SSIM, and
LPIPS are identical to every reported decimal, and the maximum absolute pixel
difference over all $1{,}686$ test images is zero: the outputs are bitwise
identical. This is the algebraically expected outcome rather than a coincidence.
With the severity input set to zero, the trained severity columns contribute
$W_{\text{sev}}\!\cdot\!\mathbf{0}=\mathbf{0}$, the same contribution made by the
masked columns of the dropped configuration; the first convolution therefore
receives an identical tensor in both cases and every downstream layer follows.
The audit confirms that the implementation matches this algebra, with no
discrepancy from sampling or non-deterministic kernels.

\medskip
\noindent \textbf{Interpretation.} Both configurations reach $31.28$~dB,
$0.02$~dB below the full model ($31.30$~dB) and $0.05$~dB above the
schedule-matched baseline ($31.23$~dB). In this single exploratory run, most of \method{}'s oracle gain survives without the descriptor at inference, at an SSIM cost of $0.004$ relative to the baseline, consistent with Section~VII of the main text.

\medskip
\noindent \textbf{Sanity check.} The true-severity-map configuration measured
through this same pipeline reproduces $31.30$~dB, confirming the severity channel
reaches the network and that the audited measurements are meaningful rather than
artifacts of an inactive input.

\section*{Appendix F: Per-Video Cluster Table}
\label{app:pervideo}
This appendix lists the individual per-video mean PSNR difference
($\method - \text{DAB}$) and contributing frame count for each of the held-out
videos underlying the video-clustered significance analysis of
Section~VI-F (Fig.~4) in the main text. The aggregate statistics (Wilcoxon
$p=0.0002$; equal-video bootstrap $95\%$ CI $[0.044, 0.084]$~dB) are computed from exactly
this per-video data.

\begin{table}[h]
\centering
\caption{Per-Video Mean PSNR Difference and Frame Count, Sorted by Effect Size. The final row is the frame-weighted mean; the equal-video mean is $+0.065$~dB.}
\label{tab:pervideo}
\small
\begin{tabular}{lcc}
\toprule
Video & Mean $\Delta$ (dB) & $n$ (frames) \\
\midrule
1  & $-0.0033$ & $55$  \\
2  & $+0.0002$ & $40$  \\
3  & $+0.0168$ & $121$ \\
4  & $+0.0168$ & $72$  \\
5  & $+0.0452$ & $212$ \\
6  & $+0.0573$ & $33$  \\
7  & $+0.0647$ & $158$ \\
8  & $+0.0747$ & $59$  \\
9  & $+0.0819$ & $125$ \\
10 & $+0.0875$ & $143$ \\
11 & $+0.0973$ & $258$ \\
12 & $+0.0981$ & $160$ \\
13 & $+0.0996$ & $2$   \\
14 & $+0.1152$ & $185$ \\
15 & $+0.1156$ & $63$  \\
\midrule
\textbf{Total / mean} & \textbf{$+0.0721$} & \textbf{$1{,}686$} \\
\bottomrule
\end{tabular}
\end{table}

\noindent The $15$ test clusters correspond one-to-one with the $15$ held-out
source captures of the $36/15$ partition. Fourteen of fifteen favour \method{} (positive $\Delta$); video~1
shows a near-zero effect ($-0.0033$~dB) rather than a reversal, consistent with
the main text's description in Section~VI-F. Video~13 contributes only $2$
frames and is the smallest cluster in the dataset; excluding it changes the
video-level Wilcoxon test from $p=0.0002$ to $p=0.0004$ and the bootstrap
$95\%$~CI from $[0.044,0.084]$~dB to $[0.041,0.082]$~dB, so the result does not depend on the smallest cluster. Video identifiers are
listed as sequential integers; their mapping to source captures is given in the split manifest supplied with this submission.

\section*{Appendix G: Full Reproducibility and Training Configuration}
\label{app:repro}
This appendix consolidates all training-configuration details needed to
reproduce every table and figure in the main paper, extending the partial
hyperparameter list given in Section~VI.

\begin{table*}[t]
\centering
\caption{Full Training Configuration}
\label{tab:repro}
\small
\begin{tabular}{p{4.2cm}p{12.0cm}}
\toprule
Parameter & Value \\
\midrule
Optimizer & AdamW ($\beta_1=0.9$, $\beta_2=0.99$, no weight decay) \\
LR schedule & cosine from $2\times10^{-5}$ (minimum $10^{-6}$), period $34{,}284$ steps, stepped once per optimizer update; final rate $\approx1.05\times10^{-5}$ \\
EMA decay & $0.999$ \\
Training stages & DAB adaptation: $12$ epochs from GoPro-pretrained DiffIR; branch stage: $12$ further epochs from the DAB for every DiffIR arm, including the schedule-matched baseline; Restormer: $20$ epochs from GoPro-pretrained weights for both runs \\
Optimizer steps & $2{,}857$ micro-batches per epoch, accumulation $2$; the loss is divided by the accumulation factor under AMP gradient scaling, and the trailing micro-batch of each epoch is discarded, giving $1{,}428$ updates per epoch and $17{,}136$ per DiffIR branch \\
Batch size & physical $2$, gradient accumulation $2$, effective $4$; contrastive arm: physical $4$, no accumulation (same $1{,}428$ updates per epoch, final incomplete batch dropped) \\
Training patch size & $256\times256$, random crop \\
Test-time resolution & full-resolution, reflect-padded to a multiple of the
network stride ($8$) \\
Test-time tiling & $768$~px tiles, $48$~px overlap, weighted averaging \\
Evaluation metrics & PSNR and SSIM on 8-bit RGB (data range $255$, SSIM over colour channels, scikit-image); LPIPS with AlexNet \\
Mixed precision & AMP (\texttt{torch.amp}, \texttt{GradScaler}) \\
Sobel kernel & $3\times3$ \\
Gaussian smoothing & $15\times15$ kernel, $\sigma=2.6$ (OpenCV default for $\sigma=0$); gradient normalized by its maximum with $\epsilon=10^{-6}$ \\
Severity map storage & computed at full resolution, stored at $256\times256$ (area), bilinearly upsampled before cropping \\
Severity scale $\gamma(m_{\text{isp}})$ & $1.5-\tilde m\in[0.5,1.5]$, $\tilde m$ min--max normalized over all released metadata (training and test) \\

Contrastive temperature $\tau$ / weight & $0.1$ / $0.1$ \\
Frequency loss weight $\lambda_f$ & $0.1$ \\
FiLM inputs & five fields, z-score normalized with training-split statistics; mean $\ell_1$ auxiliary loss on the $384$-channel bottleneck, weight $0.05$ \\
Backbone & DiffIRS2~\cite{diffir}: \texttt{dim}$=48$, blocks $[3,5,6,6]$,
heads $[1,2,4,8]$, $4$ refinement blocks, $4$ diffusion timesteps \\
Widened layer & prior-encoder first projection, $48\to64$ input channels
(pixel-unshuffle$\times4$ of one severity channel), the $16$ added input channels $W'[:,48{:}64]$ zero-initialized \\
Added parameters & $9{,}216$ ($+0.034\%$ over $26.9$M) on DiffIR; $576$ on
NAFNet; $432$ on Restormer. DiffIR's conditioning encoder applies a
pixel-unshuffle before its first convolution, so the single severity channel
becomes sixteen unshuffled channels there, whereas NAFNet and Restormer read the
raw image and need one. The same structure makes DiffIR's added FLOPs
($+4.2$~G) larger than NAFNet's ($+1.06$~G). We report the measured figures
without attributing the ratio to a specific execution schedule \\
Train / test split & $5{,}714$ / $1{,}686$ paired images \\
Hardware & 1$\times$ NVIDIA Tesla T4 (16\,GB) \\
Software & PyTorch 2.x with CUDA 12.x and cuDNN 8.x, mixed-precision
(\texttt{torch.amp}); \texttt{BasicSR} for the DiffIR backbone \\
Seeds (runs 1--3) & $42$, $123$, $456$ \\
Checkpoint & final epoch for every reported run, including the shared DAB parent; no test data informs checkpoint choice or training length \\
Train / test videos & $36$ / $15$ disjoint source videos, from the released data files; version~1 of~\cite{iphoneblur} reports $39/12$, a documentation error \\
\bottomrule
\end{tabular}
\end{table*}

\section*{Appendix H: Extended Qualitative Results}
\label{app:qual}
Figure~5 of the main text shows the blurred input, the \method{}
restoration, and the ground truth, and notes that it omits the Domain-Adapted
Baseline column for column width. Figure~\ref{fig:dabquali} supplies that
comparison.

\begin{figure*}[t]
\centering
\includegraphics[width=\textwidth]{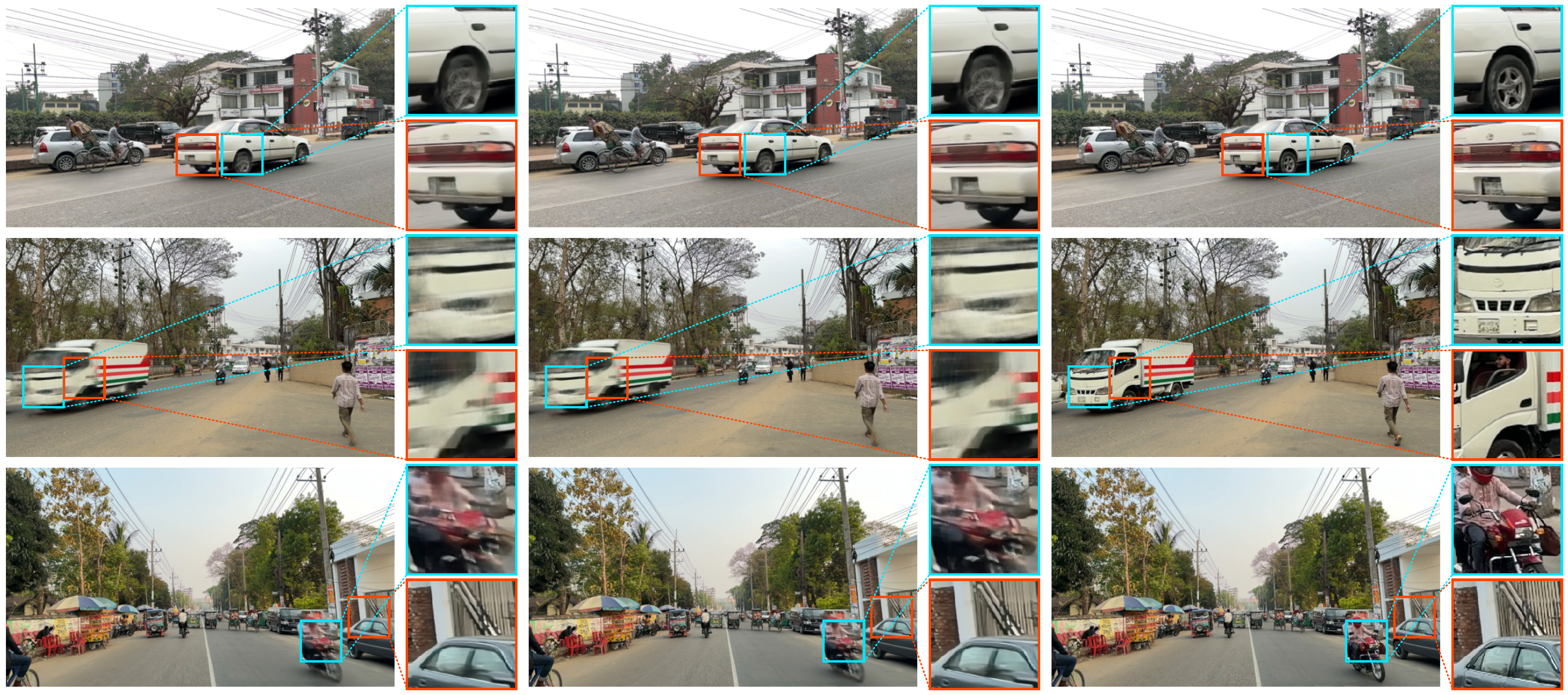}
\caption{Qualitative comparison with the DAB, omitted from Fig.~5 of the main text for width. Columns: DAB, \method{}, ground truth; two magnified high-motion crops per example (cyan, orange). Rows span the difficulty tiers.}
\label{fig:dabquali}
\end{figure*}

\medskip
\noindent\textbf{Failure cases.}
Figure~\ref{fig:failures} shows examples on which both models leave substantial
residual blur, illustrating the regime discussed in Section~VII of the
main text.

\begin{figure*}[t]
\centering
\includegraphics[width=\textwidth]{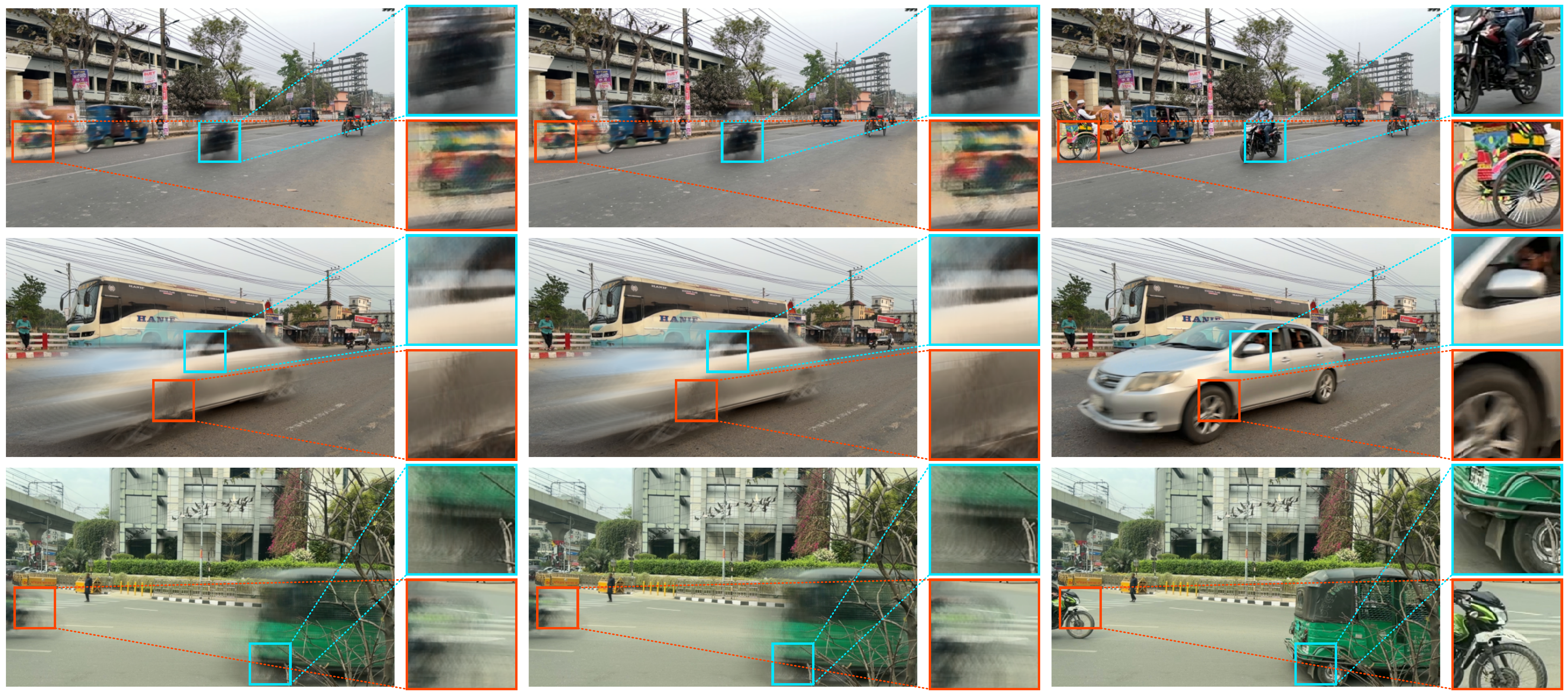}
\caption{Failure cases, in the layout of Fig.~\ref{fig:dabquali}. Blur is concentrated in a small, fast-moving foreground region against a sharp background, and neither model recovers the subject. A single per-image scale $\gamma(m_{\text{isp}})$ cannot represent such localized severity.}
\label{fig:failures}
\end{figure*}

\section*{Appendix I: Code and Data Availability}
\label{app:code}
Code sufficient to reproduce every table and figure in the main paper and this
supplement, including training configurations, evaluation scripts, and
per-image predictions with video identifiers and the $36/15$ source-video split manifest, will be released upon publication at
\url{https://github.com/C-loud-Nine/Severity-Channel-Conditioning}. The
iPhoneBlur benchmark used throughout is described in~\cite{iphoneblur}; its
licensing and consent terms for consumer-device captures are documented at
its accompanying publication~\cite{iphoneblur} (arXiv:2605.05990).

\end{document}